\newtheorem{remark}{Remark}
\newtheorem*{proof}{Proof}
\newtheorem{lemma}{Lemma}
\newtheorem{proposition}{Proposition}
\newtheorem{theorem}{Theorem}
\newtheorem{assumption}{Assumption}
\title{\Large \bf Approximating arrival costs in distributed moving horizon estimation: A recursive method}
\author{
\centerline{\normalsize Xiaojie Li$^{a}$, Xunyuan Yin$^{a,b,}$\thanks{Corresponding author: Xunyuan Yin. Tel: (+65)63168746. Email: xunyuan.yin@ntu.edu.sg}
}
\vspace{5mm}\\
\centerline{\small $^{a}$School of Chemistry, Chemical Engineering and Biotechnology, Nanyang Technological University,}\\
\centerline{\small 62 Nanyang Drive, 637459, Singapore}\\
\centerline{\small $^{b}$Environmental Process Modelling Centre, Nanyang Environment and Water Research Institute (NEWRI),}\\
\centerline{\small Nanyang Technological University, 1 CleanTech Loop, 637141, Singapore}}
\begin{document}

\date{}
\maketitle
\setstretch{1.5}

\begin{abstract}
In this paper, we present a new approach to distributed moving horizon estimation for constrained nonlinear processes. The method involves approximating the arrival costs of local estimators through a recursive framework. First, distributed full-information estimation for linear unconstrained systems is presented, which serves as the foundation for deriving the analytical expression of the arrival costs for the local estimators. Subsequently, we develop a recursive arrival cost design for linear distributed moving horizon estimation.
Sufficient conditions are derived to ensure the stability of the estimation error for constrained linear systems.
Next, we extend the arrival cost design derived for linear systems to account for nonlinear systems, and a partition-based constrained distributed moving horizon estimation algorithm for nonlinear systems is formulated. A benchmark chemical process is used to illustrate the effectiveness and superiority of the proposed method.
\end{abstract}

\noindent{\bf Keywords:} Distributed state estimation, moving horizon estimation, arrival cost approximation, nonlinear processes

\section{Introduction}
The partition-based distributed framework has  emerged as a promising structure for developing scalable and flexible decision-making solutions for large-scale complex industrial processes, since it can provide higher fault tolerance, reduced computational complexity, and increased flexibility for system maintenance
\cite{christofides2013distributed, battistelli2016stability,yin2018forming}.
Within a partition-based distributed decision-making framework, a large-scale process is partitioned into smaller subsystems that are interconnected with each other.
Multiple decision-making units are deployed for the subsystems and coordinate their decisions through real-time communication \cite{daoutidis2019decomposition,tang2018optimal,chen2020machine}.
To enable distributed decision-making systems to take informed control actions for appropriate process operation, it is crucial to have distributed state estimation capabilities that can provide real-time full-state estimates for the underlying systems
\cite{battistelli2016stability,farina2009moving,schneider2015convergence}.
In this paper, we focus on partition-based distributed state estimation for general nonlinear systems.

As an effective distributed state estimation approach, distributed moving horizon estimation (DMHE) offers the capability to handle process nonlinearity and address constraints imposed on both state variables and process disturbances \cite{farina2009moving,schneider2015convergence,schneider2017solution,schneider2015iterative,yin2022event, li2023iterative,yin2017distributed,farina2011moving,zhang2013distributed,zeng2015distributed}. In \cite{farina2009moving}, non-iterative partition-based DMHE algorithms
were proposed for linear systems considering constraints on state variables and process disturbances. These algorithms design local estimators based on partitioned subsystem models, with each local estimator handling state estimation for the corresponding subsystem with non-overlapping states.
In \cite{schneider2015convergence,schneider2017solution,schneider2015iterative,yin2022event,li2023iterative}, partition-based DMHE approaches that require iterative executions within each sampling period were proposed for linear systems; these designs ensure the convergence of the state estimates generated by DMHE to their the corresponding centralized moving horizon estimation (MHE) counterparts.
Based on these approaches, the objective function of centralized MHE is partitioned into several individual objective functions. An additional term is then incorporated with each partitioned objective function to construct the local objective function for the proposed DMHE algorithms. Particularly, in  \cite{schneider2015convergence,schneider2017solution,schneider2015iterative}, a sensitivity term is integrated with the partitioned objective function to account for the impact of each local decision variable on the objective functions of interconnected subsystems, while in \cite{yin2022event,li2023iterative}, penalties on measurement noise from interconnected subsystems are incorporated to form the local objective function of each estimator. In \cite{yin2017distributed,farina2011moving}, partition-based DMHE approaches for nonlinear systems were proposed.
In \cite{zhang2013distributed,zeng2015distributed}, DMHE for constrained nonlinear systems was addressed in a way that an auxiliary observer is integrated with the corresponding MHE to form an enhanced MHE-based constrained estimator for each subsystem of the entire nonlinear process.

In MHE design, previous information not included in the current estimation window can be summarized by a function referred to as arrival cost. {\color{black}An accurate approximation of the arrival cost can enhance estimation performance \cite{qu2009computation,lopez2011constrained}. Additionally, a well-approximated arrival cost allows for a reduction in the length of the estimation window without compromising the accuracy of state estimates \cite{lopez2011constrained}. This reduction in the estimation window length can enhance the computational efficiency by decreasing the complexity of the optimization problem.}
In centralized MHE designs, various methods have been proposed to approximate the arrival cost. For linear systems, the Kalman filter has been widely used in the approximation of arrival cost \cite{rao2001constrained,gharbi2019proximity}. For nonlinear systems, solutions for approximating the arrival cost of centralized MHE include extended Kalman filter \cite{rao2002constrained,gharbi2020proximity}, unscented Kalman filter \cite{qu2009computation}, and particle filter \cite{lopez2011constrained}.
In a distributed context, accurately approximating the arrival costs for the local estimators of DMHE becomes a more complicated problem. Different approximation methods for arrival cost approximation have been adopted for linear DMHE.
In
\cite{schneider2015convergence,schneider2017solution,schneider2015iterative,yin2022event,li2023iterative}, the arrival cost was formulated as a weighted squared error between the state estimate and the \emph{a priori} prediction, weighted by a constant matrix, which is fine-tuned to satisfy stability conditions. Additionally, in \cite{farina2009moving}, a Kalman filter design for an auxiliary system was leveraged to approximate the weighting matrix for the arrival cost at each sampling instant. In \cite{li2023data}, a partition-based DMHE method was proposed for the state estimation of data-driven subsystem models. In this work, the update of the arrival cost for DMHE design was facilitated by using a partition-based distributed Kalman filter approach proposed in \cite{li2023partition}.
Meanwhile, results on approximating the arrival costs for nonlinear DMHE algorithms have been limited.
In \cite{yin2017distributed} where a two-time-scale nonlinear DMHE was proposed, the arrival costs for the local estimators were not considered. In \cite{farina2011moving}, the weighting matrix for the arrival cost design of each estimator was updated at each sampling instant. However, this paper only presents the conditions for the weighting matrix to satisfy and does not explicitly provide the update formula for the weighting matrix.
In \cite{zhang2013distributed,zeng2015distributed}, decentralized extended Kalman filters were utilized to approximate the arrival costs for local estimators of observer-enhanced DMHE. However, in each of the two designs, the interactive dynamics were not taken into account.

In this paper, we address the problem of approximating the arrival costs for the local estimators of a partition-based DMHE design and formulate a partition-based distributed estimation scheme for general nonlinear processes with state constraints.
The objective of this work is achieved in four steps: 1) we derive an analytical expression of the arrival cost for each local estimator of the DMHE in the unconstrained linear context from the design of the distributed full-information estimation formulation in \cite{li2023partition};
2) we conduct the stability analysis for the proposed DMHE algorithm for linear systems with state constraints;
3) the analytical expression of arrival cost obtained for linear unconstrained systems is extended to account for nonlinear systems; 4) we formulate a partition-based constrained DMHE algorithm for general nonlinear systems, where each local estimator incorporates output measurements of the interacting subsystems and approximates the local arrival cost using the derived recursive solution. The effectiveness of the proposed method is demonstrated using a simulated chemical process.
Some initial findings from this study were presented in a conference paper \cite{NMPC}.
Compared with \cite{NMPC}, this paper presents the stability analysis for the proposed DMHE algorithm for linear systems with state constraints. Additionally, we include additional comparisons to demonstrate the efficacy and superiority of the proposed DMHE approach.

\section{Preliminaries}
\subsection{Notation}
$\text{diag}\left\{A_1,\ldots,A_n\right\}$ represents the block diagonal matrix with blocks
$A_{i}$, $i=1,\ldots, n$. $[A_{ij}]$ represents a block matrix where the $A_{ij}$ is the submatrix in the $i$th row and the $j$th column. $I_{n}$ is an $n\times n$ identity matrix.
$\|z\|_{A}^{2}=z^{\mathrm{T}}Az$ is the square of the weighted Euclidean norm of vector $z$. $\text{col}\{z_{1},\ldots,z_{n}\}$ denotes a column vector containing a sequence $z_{1}, \ldots, z_{n}$.
$\{z\}_{a}^{b}=\text{col}\{z_{a}, z_{a+1}, \ldots, z_{b}\}$.

\subsection{System description}\label{sec:2.2}
Let us consider nonlinear systems consisting of $n$ interconnected subsystems.
The dynamics of the $i$th subsystem, $i\in\mathbb{I}=\{1,2,\ldots,n\}$, are expressed as follows:
\begin{subequations}\label{model_nonlinear}
  \begin{align}
    x^{i}_{k+1} & = f_{i}\big(x^{i}_{k}, X^{i}_{k}\big)+w^{i}_{k} \label{model_nonlinear1} \\
    y^{i}_{k} & =h_{i}(x^{i}_{k})+v^{i}_{k}\label{model_nonlinear2}
  \end{align}
\end{subequations}
where $k$ denotes discrete-sampling instant; $x^{i}_{k}\in\mathbb{R}^{n_{x^{i}}}$ and $y^{i}_{k}\in\mathbb{R}^{n_{y^i}}$ represent the state vector and output measurements of the $i$th subsystem, respectively;
$X^{i}_{k}\in \mathbb{R}^{n_{X^{i}}}$ is a vector of the states of all the subsystems that have direct influence on the dynamics of subsystem $i$;  $w^{i}_{k}\in\mathbb{R}^{n_{x^i}}$ and $v^{i}_{k}\in\mathbb{R}^{n_{y^{i}}}$ represent the unknown process disturbances and measurement noise associated with subsystem $i$, respectively;
$f_{i}$ is a vector-value nonlinear function characterizing the dynamics of subsystem $i$; $h_i$ is the output measurement function for subsystem $i$, $i\in\mathbb{I}$.

By considering the variables for the entire system as the aggregation of the variables for each subsystem $i$, $i\in\mathbb{I}$: $z_{k}={\rm{col}}\{z^1_k,\ldots, z^n_k\}\in\mathbb{R}^{n_z}$ for vector $z\in\{x,y,w,v\}$,
the dynamics of the entire system can be formulated in the following compact form:
\begin{subequations}\label{model_nonlinear_centralized}
\begin{align}
    x_{k+1} & = f(x_{k})+w_{k} \label{model_nonlinear_centralized1} \\
    y_{k}& =h(x_{k})+v_{k}\label{model_nonlinear_centralized2}
\end{align}
\end{subequations}
where $f(x_{k})$ and $h(x_{k})$ can be obtained from $f_{i}(x_{k},X^{i}_{k})$ and $h_{i}(x^{i}_{k})$, respectively.
\subsection{Problem formulation}
In this work, we aim to propose a partition-based DMHE algorithm for nonlinear systems, which integrates a recursive update mechanism for the arrival cost of each local estimator of the distributed scheme. To achieve this goal,
we first present a distributed full-information estimation (FIE) design by partitioning the objective function of a centralized FIE problem. From this design, we derive the analytical expression of the arrival cost for the proposed DMHE method in a linear context. Subsequently, we formulate the proposed DMHE method based on the obtained arrival costs, and sufficient conditions are provided to guarantee the stability of the proposed DMHE approach under state constraints. Finally, the obtained arrival costs for linear systems are extended to account for the nonlinear systems, and a partition-based DMHE algorithm for constrained nonlinear context is formulated.

Based on the above consideration, first, we start by examining a class of linear systems consisting of
$n$ subsystems. The dynamics of $i$th linear subsystem, $i\in\mathbb{I}$, are described as follows:
\begin{subequations}\label{model}
  \begin{align}
   x^{i}_{k+1}&=A_{ii}x^{i}_k+\sum_{l\in\mathbb{I}\setminus\{i\}}A_{il}x^{l}_k+w^{i}_k\label{modeli1}\\
   y^{i}_k&=C_{ii}x^{i}_k+v^{i}_k\label{modeli2}
\end{align}
\end{subequations}
\normalsize
where $A_{ii}$, $A_{il}$, and $C_{ii}$, $i\in\mathbb{I}$, $l\in\mathbb{I}\setminus\{i\}$, are subsystem matrices of compatible dimensions.

Based on the linear subsystem models in \eqref{model}, we will design MHE-based local estimators. These estimators will then be integrated to formulate a linear DMHE design, which will be extended to account for state estimation of nonlinear systems in \eqref{model_nonlinear}.
A compact form of the linear system comprising all the subsystems in the form of \eqref{model} is described as follows:
\begin{subequations}\label{cmodel}
  \begin{align}
   x_{k+1} & =Ax_{k}+w_{k} \label{cmodel1}\\
   y_k & =Cx_k+v_k \label{cmodel2}
\end{align}
\end{subequations}
where {\color{black} $A=[A_{il}]$ represents a block matrix where $A_{il}$ is the submatrix in the $i$th row and the $l$th column;} $C=\mathrm{diag}\{C_{11}, \ldots, C_{nn}\}$.
\subsection{Centralized full-information estimation}
{\color{black}Full-information estimation (FIE) is an optimization-based state estimation approach, which can be viewed as a least-squares estimation method that minimizes the cumulative sum of squared errors from the initial time instant to the current time instant.}
Specifically, at each sampling instant $k$, based on the linear model in \eqref{cmodel}, a centralized FIE design can be formulated as follows \cite{findeisen1997moving}:
\begin{subequations}\label{cfie}
    \begin{align}
&\min_{\{\hat{x}_{j}\}_{j=0}^{j=k}} \bar{\Phi}_{k} =\sum_{j=0}^{k-1}\|\hat{w}_{j}\|^2_{Q^{-1}}+\sum_{j=0}^{k}\|\hat{v}_{j}\|^2_{R^{-1}}+\|\hat{x}_{0}-\bar{x}_{0}\|_{P_{0}^{-1}}^{2}\nonumber\\
&\quad\text{s.t.}~~~~~  \hat{x}_{j+1}=A\hat{x}_{j}+\hat{w}_{j}\\
&\quad\quad\quad~~~~~~\, y_{j} = C\hat{x}_{j}+\hat{v}_{j}
\end{align}
\end{subequations}
where $\hat{x}_j$ is an estimate of state $x_{j}$; $\bar{x}_{0}$ is an \emph{a priori} estimate of initial state $x_{0}$; $\hat{w}_j$ and $\hat{v}_j$ are estimates of process disturbances $w_j$ and measurement noise $v_j$, respectively; $P_{0}$,
$Q$, and $R$ are positive-definite weighting matrices.
{\color{black} It is noted that as the number of sampling instants increases over time, the associated optimization problem in \eqref{cfie} becomes increasingly complex and intractable.  FIE is important for the design and analysis of the MHE approaches \cite{findeisen1997moving,knufer2023nonlinear}. In the next section, we introduce a distributed FIE design, which is used to guide the development of the DMHE approach.}

\section{Distributed full-information estimation}
In this section, a distributed FIE problem is formulated based on the linear subsystem models in \eqref{model}. First, we partition the global objective function of the centralized FIE design in \eqref{cfie}, following \cite{li2023partition}. Subsequently, each partitioned objective function is integrated with the sensor measurements from interconnected subsystems to construct the local objective function of each local estimator.
\subsection{Construction of local objective functions}
Following \cite{li2023partition}, the global objective function of the centralized FIE algorithm $\bar{\Phi}_k$ is decomposed into $\bar{\Phi}^{i}_k$, $i\in\mathbb{I}$, such that $\bar{\Phi}_k=\sum_{i\in\mathbb{I}}\bar{\Phi}^{i}_{k}$. The objective function of each local estimator is described as follows:
\begin{align}\label{eq:barPhi}
  \bar{\Phi}^{i}_k &=\sum_{j=0}^{k-1}\|\hat{w}^{i}_j\|^2_{Q_{i}^{-1}}+\sum_{j=0}^{k}\|\hat{v}^{i}_j\|^2_{R_{i}^{-1}}+\|\hat{x}^{i}_0-\bar{x}^i_0\|_{P_{i,0}^{-1}}^{2}
\end{align}
\normalsize
where $\hat{x}^{i}_0$ is an estimate of the $i$th subsystem state $x^{i}_0$; $\bar{x}^{i}_0$ is an initial guess of $x^{i}_0$; $\hat{w}^{i}_j$ and $\hat{v}^{i}_j$ are the estimates of disturbances $w^{i}_j$ and measurement noise $v^{i}_j$ of the $i$th subsystem, respectively. It is noted that the sensor measurements from the interconnected subsystems can provide valuable information for estimating the local subsystem states. Inspired by the objective function designs for the distributed state estimation algorithms proposed in \cite{schneider2015convergence,schneider2015iterative,li2023partition}, each local objective function of the local estimator of distributed FIE is presented as follows:
 \begin{align}\label{eq:4}
 \Phi_{k}^{i} &= \bar{\Phi}_{k}^{i}+\sum_{l\in\mathbb{I}\setminus\{i\}}\sum_{j=0}^{k}\|\hat{v}_{j}^{l}\|^{2}_{R_{l}^{-1}}\nonumber\\
 & =\sum_{j=0}^{k-1}\|\hat{w}^{i}_{j}\|^2_{Q_{i}^{-1}}+\sum_{j=0}^{k}\|\hat{v}^{[i]}_{j}\|^2_{R^{-1}}+\|\hat{x}^{i}_{0}-\bar{x}^{i}_{0}\|_{P_{i,0}^{-1}}^{2}
\end{align}
where $\hat{v}^{[i]}_{j}$ is an estimate of measurement noise $v_{j}$ of the entire system within estimator $i$;
$P_{i,0}$, $Q_{i}$, and $R_{i}$ are the $i$th diagonal block of weighting matrices $P_{0}$, $Q$, and $R$, respectively.
{\color{black}The key difference between $\bar{\Phi}_k^i$ in \eqref{eq:barPhi} and $\Phi_k^i$ in \eqref{eq:4} is that $\Phi_k^i$ includes additional information on the sensor measurements of interconnected subsystems, which can provide valuable insights for the reconstruction of the local subsystem states.}


\subsection{Distributed full-information estimation formulation}
At sampling instant $k$, the local estimator for the $i$th subsystem of distributed FIE can be formulated by
leveraging the local objective function proposed in \eqref{eq:4}:
\normalsize
\begin{subequations}\label{fie}
\begin{align}
&\min_{\{\hat{x}^{i}_j\}_{j=0}^{j=k}} \Phi^{i}_{k}=\sum_{j=0}^{k-1}\|\hat{w}^{i}_{j}\|^2_{Q_{i}^{-1}}+\sum_{j=0}^{k}\|\hat{v}^{[i]}_{j}\|^2_{R^{-1}}+\|\hat{x}^{i}_{0}-\bar{x}^{i}_{0}\|_{P_{i,0}^{-1}}^{2} \nonumber\\
&~\,\mathrm{s.t.}~\hat{x}_{j+1}^{i}=A_{ii}\hat{x}_{j}^{i}+\sum_{l\in\mathbb{I}\setminus\{i\}}A_{il}\tilde{x}_{j}^{l}+\hat{w}_{j}^{i}\label{eq:upd0_2}\\
&~\,\quad~~\quad\, y_{0}= C_{[:,i]}\hat{x}_{0}^{i}+\sum_{l\in\mathbb{I}\setminus\{i\}}C_{[:,l]}\tilde{x}_{0}^{l}+\hat{v}_{0}^{[i]}\label{eq:upd0_4}\\
&~\,\quad~~~ y_{j+1}= C(A_{[:,i]}\hat{x}_{j}^{i}+\sum_{l\in\mathbb{I}\setminus\{i\}}A_{[:,l]}\tilde{x}_{j}^{l})+v_{j+1}^{[i]},~j=0,\ldots,k-1\label{eq:upd0_3}
\end{align}
\end{subequations}
\normalsize
where $A_{[:,i]}$ and $C_{[:,i]}$ comprise the columns of $A$ and $C$ with respect to the state of $x^{i}$, respectively;
{\color{black} $\tilde{x}_{j}^{l}$ is a (conservative) state estimate of subsystem $l$ for sampling instant $j$ made available to the estimator of subsystem $i$, which is determined as:
\begin{equation}\label{eq:tilde}\normalsize
\tilde{x}_{j}^{l} =\left\{ \begin{array}{l}
\hat{x}_{j|k-1}^{l},~~~~~{\text{for}}~ j=1,\ldots,k-1~~\\[0.3em]
\bar{x}_{0}^{l},~~~~~~~~~~{\text{for}}~ j=0~~
\end{array} \right.
\end{equation}
In \eqref{eq:tilde}, $\hat{x}_{j|k-1}^{l}$ is the state estimate of $x^{l}_{j}$ calculated by the $l$th estimator at sampling instant $k-1$, and $\bar{x}_{0}^{l}$ denotes an initial guess of $x_{0}^{l}$, $l\in\mathbb{I}\setminus\{i\}$. {\color{black} We utilize the state estimates $\hat{x}_{j|k-1}^{l}$ obtained at the previous time instant $k-1$ because they are calculated based on the most recent available sensor measurements.} In the remainder of this paper, we simplify the subscript by omitting ``$|k$", and we
denote $\hat{x}_{j|k}^{i}$ by $\hat{x}_{j}^{i}$ for brevity.
}

\section{Distributed moving horizon estimation for linear systems}
In this section, a linear DMHE design with a recursive update of arrival costs for the local estimators is presented. The stability of the proposed linear DMHE design is analyzed.
First, we obtain an analytical recursive expression of the arrival cost by leveraging the distributed FIE in \eqref{fie}. Following this, we formulate a DMHE design where the arrival cost of each local estimator is updated using the derived recursive method for linear systems with state constraints. Additionally, we prove the stability of the proposed DMHE approach.

Inspired by the DMHE designs proposed by \cite{schneider2015convergence} and \cite{li2023iterative}, the $i$th estimator of the proposed DMHE
at sampling instant $k$ solves the optimization below:
\begin{subequations}\label{eq:upd0}
\begin{equation}
\min_{\{\hat{x}_{j}^{i}\}_{j=k-N}^{k}}\Phi_{k}^{i,\mathrm{MHE}}~\text{subject to \eqref{eq:upd0_2}, \eqref{eq:upd0_4}, and \eqref{eq:upd0_3}}
\end{equation}
with
\normalsize
\begin{align}
\Phi_{k}^{i,\mathrm{MHE}}&=V_{k-N}^{i,o}+\sum_{j=k-N}^{k-1}\|\hat{w}^{i}_{j}\|^2_{Q_{i}^{-1}}+\sum_{j=k-N}^{k}\|\hat{v}_{j}^{[i]}\|^2_{R^{-1}}
\end{align}
\end{subequations}
\normalsize
where $N\geq1$ is the length of the estimation window; $V_{k-N}^{i,o}$ is the arrival cost that summarizes the historical information excluded from the estimation window. The detailed arrival cost design will be discussed in Section \ref{sec:4.1}. Before proceeding, we introduce two lemmas, which will be used to derive the expression of the arrival cost for each estimator of the proposed DMHE design.

\begin{lemma}(\cite{rawlings2012postface})\label{lem:j1+j2}
  Consider the following two quadratic functions:
  \begin{equation*}
    J_{1}(x) = \|x-a\|_{A^{-1}}^{2},\quad J_{2}(x) = \|Cx-b\|_{B^{-1}}^{2},
  \end{equation*}
  where $A$ and $B$ are positive definite matrices; a and b are two vectors of compatible dimensions. Then, it holds that
  \begin{equation*}
    J_{1}(x)+J_{2}(x)  =\|x-\sigma\|_{H^{-1}}^{2}+\pi,
  \end{equation*}
  where
\begin{align*}
  H &=A-AC^{\mathrm{T}}(CAC^{\mathrm{T}}+B)^{-1}CA,  \\
   \sigma&=a+AC^{\mathrm{T}}(CAC^{\mathrm{T}}+B)^{-1}(b-Ca), \\
   \pi&=J_{1}(\sigma)+J_{2}(\sigma).
\end{align*}
\end{lemma}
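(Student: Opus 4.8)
The plan is to treat $g(x):=J_{1}(x)+J_{2}(x)$ as what it is — a strictly convex quadratic in $x$ — and simply rewrite it in ``vertex form''. First I would expand
$g(x)=x^{\mathrm T}\big(A^{-1}+C^{\mathrm T}B^{-1}C\big)x-2x^{\mathrm T}\big(A^{-1}a+C^{\mathrm T}B^{-1}b\big)+\big(a^{\mathrm T}A^{-1}a+b^{\mathrm T}B^{-1}b\big)$, and set $M:=A^{-1}+C^{\mathrm T}B^{-1}C$, which is positive definite since $A^{-1}\succ 0$ and $C^{\mathrm T}B^{-1}C\succeq 0$. Because $g$ is a quadratic with positive-definite Hessian $2M$, it has the unique minimizer $\sigma^{\star}:=M^{-1}\big(A^{-1}a+C^{\mathrm T}B^{-1}b\big)$, and the elementary completion-of-squares identity $g(x)=g(\sigma^{\star})+(x-\sigma^{\star})^{\mathrm T}M\,(x-\sigma^{\star})$ holds for all $x$ (a purely algebraic fact once $M^{-1}$ exists). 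Since $\pi$ is \emph{defined} as $J_{1}(\sigma)+J_{2}(\sigma)=g(\sigma)$, once I show $M^{-1}=H$ and $\sigma^{\star}=\sigma$ the lemma drops out immediately.

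So the work reduces to two matrix identities. For the first, I would apply the Woodbury matrix identity to $\big(A^{-1}+C^{\mathrm T}B^{-1}C\big)^{-1}$, which gives exactly $A-AC^{\mathrm T}(CAC^{\mathrm T}+B)^{-1}CA=H$; note $CAC^{\mathrm T}+B$ is invertible because $B\succ 0$, so $H$ is well defined and positive definite. For the second, writing $S:=CAC^{\mathrm T}+B$ and using $CAC^{\mathrm T}=S-B$, a two-line computation yields the intermediate identity $HC^{\mathrm T}=AC^{\mathrm T}S^{-1}B$, hence $HC^{\mathrm T}B^{-1}b=AC^{\mathrm T}S^{-1}b$, while $HA^{-1}a=a-AC^{\mathrm T}S^{-1}Ca$. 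Adding the two gives $\sigma^{\star}=H\big(A^{-1}a+C^{\mathrm T}B^{-1}b\big)=a+AC^{\mathrm T}S^{-1}(b-Ca)=\sigma$, as required.

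Substituting $M=H^{-1}$ and $\sigma^{\star}=\sigma$ into the vertex form gives $g(x)=g(\sigma)+\|x-\sigma\|_{H^{-1}}^{2}=\pi+\|x-\sigma\|_{H^{-1}}^{2}$, which is the claim. I expect the only real obstacle to be bookkeeping: keeping the Woodbury application and the reconciliation $\sigma^{\star}=\sigma$ straight, both of which become routine after rewriting $CAC^{\mathrm T}$ as $S-B$. There is no analytic difficulty — the content of the lemma is precisely that the three displayed formulas for $H$, $\sigma$, $\pi$ are the ones produced by this completion of squares. As an alternative I could instead verify the identity by brute-force expansion of the right-hand side, but routing it through ``Hessian plus minimizer'' makes the matrix algebra shorter and less error-prone.
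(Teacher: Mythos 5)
Your proof is correct and complete. The paper itself does not prove this lemma---it simply cites \cite{rawlings2012postface}---but your argument (expand the sum as a quadratic with Hessian-related matrix $M=A^{-1}+C^{\mathrm{T}}B^{-1}C$, complete the square about the minimizer, identify $M^{-1}=H$ via the Woodbury identity, and reconcile $\sigma^{\star}=\sigma$ using $HC^{\mathrm{T}}=AC^{\mathrm{T}}(CAC^{\mathrm{T}}+B)^{-1}B$) is the standard derivation and all the algebraic steps check out, including the key intermediate identities $HA^{-1}a=a-AC^{\mathrm{T}}S^{-1}Ca$ and $HC^{\mathrm{T}}B^{-1}b=AC^{\mathrm{T}}S^{-1}b$.
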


  \begin{lemma}({\color{black}Woodbury matrix identity} \cite{henderson1981deriving})\label{inversion}
If matrices $A$ and $D$ have full rank, then the following equations hold:
\begin{align*}
(A+BDC)^{-1} &= A^{-1}-A^{-1}B(CA^{-1}B+D^{-1})^{-1}CA^{-1}\\
(A+BDC)^{-1}BD&=A^{-1}B(D^{-1}+CA^{-1}B)^{-1}
\end{align*}
\end{lemma}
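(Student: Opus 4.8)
The plan is to establish both identities by direct verification, exploiting the standing assumption that $A+BDC$ is invertible: it then suffices to exhibit a matrix $M$ for which $(A+BDC)M = M(A+BDC) = I$, and $M$ must be the inverse. First I would take $M := A^{-1}-A^{-1}B(CA^{-1}B+D^{-1})^{-1}CA^{-1}$ and form the product $(A+BDC)M$. Expanding produces four terms; grouping the two terms that carry the common right factor $B(CA^{-1}B+D^{-1})^{-1}CA^{-1}$ gives
$(A+BDC)M = I + BDCA^{-1} - B\bigl(I + DCA^{-1}B\bigr)(CA^{-1}B+D^{-1})^{-1}CA^{-1}$.

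The single non-mechanical step — the ``trick'' — is to recognize the factorization $I + DCA^{-1}B = D\bigl(D^{-1} + CA^{-1}B\bigr)$, which is legitimate because $D$ has full rank. Substituting it, the factor $D\bigl(D^{-1}+CA^{-1}B\bigr)$ cancels against $(CA^{-1}B+D^{-1})^{-1}$, leaving $I + BDCA^{-1} - BDCA^{-1} = I$. Repeating the computation with the factors in the reverse order (forming $M(A+BDC)$ and using the mirror factorization on the left) confirms that $M$ is a two-sided inverse; this argument also makes explicit that $CA^{-1}B + D^{-1}$ must be invertible for the formula to be meaningful, so I would either list that as an additional hypothesis or remark that it holds automatically when $A+BDC$ is invertible. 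This disposes of the first identity.

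For the second identity I would avoid a fresh computation and instead post-multiply the just-proved first identity by $BD$, obtaining $(A+BDC)^{-1}BD = A^{-1}BD - A^{-1}B(CA^{-1}B+D^{-1})^{-1}CA^{-1}BD$. Factoring out $A^{-1}B$ on the left and pulling $(CA^{-1}B+D^{-1})^{-1}$ to the front of the remaining bracket turns the bracketed expression into $(CA^{-1}B+D^{-1})^{-1}\bigl[(CA^{-1}B+D^{-1})D - CA^{-1}BD\bigr] = (CA^{-1}B+D^{-1})^{-1}\bigl[CA^{-1}BD + I - CA^{-1}BD\bigr] = (CA^{-1}B+D^{-1})^{-1}$, which is exactly $A^{-1}B(D^{-1}+CA^{-1}B)^{-1}$. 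There is no real obstacle here beyond careful bookkeeping of which matrices are inverted at each step; once the factorization $I + DCA^{-1}B = D(D^{-1}+CA^{-1}B)$ is spotted, everything reduces to elementary algebra.
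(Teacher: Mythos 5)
Your proposal is correct. Note, however, that the paper does not actually prove this lemma: it is quoted as a classical result (the Woodbury matrix identity) with a citation to Henderson and Searle, so there is no in-paper argument to compare against. Your direct verification is the standard one and is carried out correctly: the product $(A+BDC)\bigl(A^{-1}-A^{-1}B(CA^{-1}B+D^{-1})^{-1}CA^{-1}\bigr)$ collapses to $I$ once you use the factorization $I+DCA^{-1}B=D(D^{-1}+CA^{-1}B)$, and deriving the second identity by post-multiplying the first by $BD$ and pulling $(CA^{-1}B+D^{-1})^{-1}$ out of the bracket is clean and avoids redundant computation. Your observation that the statement as given is slightly incomplete --- invertibility of $A$ and $D$ alone does not guarantee that $CA^{-1}B+D^{-1}$ (equivalently $A+BDC$) is invertible, so this must either be assumed or noted as the condition under which the formulas are meaningful --- is a genuine and worthwhile refinement of the lemma as stated in the paper.
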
\vspace{-1.5mm}

\normalsize
\subsection{Arrival costs for linear systems}\label{sec:4.1}
By comparing the design of the distributed FIE in \eqref{fie} with the proposed DMHE in \eqref{eq:upd0}, the arrival cost $V_{k-N}^{i,o}$ can be constructed by deriving the analytical solution to the following optimization problem:
\begin{equation*}
V_{k-N}^{i,o}=\min_{\{\hat{x}_{j}^{i}\}_{j=0}^{k-N-1}}V_{k-N}^{i}~\text{subject to \eqref{eq:upd0_2}, \eqref{eq:upd0_4}, and \eqref{eq:upd0_3}}
\end{equation*}
where
\begin{align*}\label{eq:upd00_2}
V_{k-N}^{i}&=\|\hat{x}_{0}^{i}-\bar{x}_{0}^{i}\|^{2}_{P_{i,0}^{-1}}+\sum_{j=0}^{k-N-1}\|\hat{w}^{i}_{j}\|^2_{Q_{i}^{-1}}+\sum_{j=0}^{k-N-1}\|\hat{v}_{j}^{[i]}\|^2_{R^{-1}}
\end{align*}

First, we consider the arrival cost $V_{1}^{i,o}$, which can be formulated by deriving the analytical solution to the optimization problem below:
\begin{subequations}\label{eq:upd1}
\begin{equation}\label{eq:upd1_1}
V_{1}^{i,o}=\min_{\hat{x}_{0}^{i}}V_{1}^{i}~\text{subject to \eqref{eq:upd0_2} and \eqref{eq:upd0_4} for}~j=0
\end{equation}
\normalsize
where
\begin{align}\label{eq:upd1_2}
V_{1}^{i}&=\|\hat{x}_{0}^{i}-\bar{x}_{0}^{i}\|^{2}_{P_{i,0}^{-1}}+\|\hat{v}_{0}^{[i]}\|^{2}_{R^{-1}}+\|\hat{w}_{0}^{i}\|^{2}_{Q_{i}^{-1}}
\end{align}
\end{subequations}
\normalsize
Considering \eqref{eq:upd0_4} and Lemma \ref{lem:j1+j2}, $V_{1}^{i}$ can be rewritten as
\normalsize
\begin{align}\label{eq:upd_2}
V^{i}_{1}& = \|\hat{x}_{0}^{i}-\bar{x}_{0}^{i}\|^{2}_{P_{i,0}^{-1}}+\|y_{0}-C_{[:,i]}\hat{x}_{0}^{i}-\sum_{l\in\mathbb{I}\setminus\{i\}}C_{[:,l]}\tilde{x}_{0}^{l}\|^{2}_{R^{-1}}+\|\hat{w}_{0}^{i}\|^{2}_{Q_{i}^{-1}}\nonumber\\
  & = \|\hat{x}_{0}^{i}-\breve{x}_{0}^{i}\|^{2}_{\breve{P}_{i,0}^{-1}}+\pi^{i}_{0}+\|\hat{w}_{0}^{i}\|^{2}_{Q_{i}^{-1}}
\end{align}
\normalsize
where
\normalsize
  \begin{align*}
  \breve{P}_{i,0} & = P_{i,0}- P_{i,0}C_{[:,i]}^{\mathrm{T}}(C_{[:,i]}P_{i,0}C_{[:,i]}^{\mathrm{T}}+R)^{-1}C_{[:,i]}P_{i,0}\\
   \breve{x}_{0}^{i}  & =\bar{x}_{0}^{i}+ P_{i,0}C_{[:,i]}^{\mathrm{T}}(C_{[:,i]}P_{i,0}C_{[:,i]}^{\mathrm{T}}+R)^{-1} (y_{0}- C_{[:,i]}\bar{x}_{0}^{i}-\sum_{l\in\mathbb{I}\setminus\{i\}}C_{[:,l]}\tilde{x}_{0}^{l})\\
    \pi^{i}_{0} & = \|\breve{x}_{0}^{i}  -\bar{x}_{0}^{i}\|^{2}_{P_{i,0}^{-1}}+\|y_{0}- C_{[:,i]}\breve{x}_{0}^{i}-\sum_{l\in\mathbb{I}\setminus\{i\}}C_{[:,l]}\tilde{x}_{0}^{l}\|^{2}_{R^{-1}}
  \end{align*}
\normalsize
As $\pi^{i}_{0}$ is a constant, we do not need to take it into account when solving the optimization problem \eqref{eq:upd1}.
According to \eqref{eq:upd0_2}, Lemma \ref{lem:j1+j2}, and Lemma \ref{inversion}, it is further obtained that
\begin{align}\label{eq:upd_333}
V^{i}_{1}& = \|\hat{x}_{0}^{i}-\breve{x}_{0}^{i}\|^{2}_{\breve{P}_{i,0}^{-1}}+\|\hat{x}_{1}^{i}-A_{ii}\hat{x}_{0}^{i}-\sum_{l\in\mathbb{I}\setminus\{i\}}A_{il}\tilde{x}_{0}^{l}\|^{2}_{Q_{i}^{-1}}\nonumber\\
&=\|\hat{x}_{0}^{i}-\breve{\sigma}^{i}_{0}\|^{2}_{H_{0}^{i}}+\breve{\pi}^{i}_{0}
\end{align}
where
\begin{subequations}
     \begin{align}
    H_{0}^{i}&= \breve{P}_{i,0}^{-1}+A_{ii}^{\mathrm{T}}Q_{i}^{-1}A_{ii}\\
      \breve{\sigma}^{i}_{0} & = \breve{x}_{0}^{i}+\breve{P}_{i,0}A_{ii}^{\mathrm{T}}(A_{ii}\breve{P}_{i,0}A_{ii}^{\mathrm{T}}+Q_{i})^{-1}(\hat{x}_{1}^{i}-A_{ii}\breve{x}_{0}^{i}-\sum_{l\in\mathbb{I}\setminus\{i\}}A_{il}\tilde{x}_{0}^{l}) \\
      \breve{\pi}^{i}_{0} & =
      \|\breve{x}_{0}^{i}-\breve{\sigma}^{i}_{0}\|^{2}_{\breve{P}_{i,0}^{-1}}+\|\hat{x}_{1}^{i}-A_{ii}\breve{\sigma}^{i}_{0}-\sum_{l\in\mathbb{I}\setminus\{i\}}A_{il}\tilde{x}_{0}^{l}\|^{2}_{Q_{i}^{-1}}\label{eq:p1}
    \end{align}
\end{subequations}
Let $\bar{x}_{1}^{i}:=A_{ii}\breve{x}_{0}^{i}+\sum_{l\in\mathbb{I}\setminus\{i\}}A_{il}\tilde{x}_{0}^{l}$ and $L_{i,0}=\breve{P}_{i,0}A_{ii}^{\mathrm{T}}(A_{ii}\breve{P}_{i,0}A_{ii}^{\mathrm{T}}+Q_{i})^{-1}$.
$\breve{\pi}^{i}_{0}$ in \eqref{eq:p1} can be reformulated as follows:
\begin{align*}
    \breve{\pi}^{i}_{0} & =\|L_{i,0}(\hat{x}_{1}^{i}-\bar{x}_{1}^{i})\|^{2}_{\breve{P}_{i,0}^{-1}}+\|(I-A_{ii}L_{i,0})(\hat{x}_{1}^{i}-\bar{x}_{1}^{i})\|^{2}_{Q_{i}^{-1}}\nonumber\\
    &=\|\hat{x}_{1}^{i}-\bar{x}_{1}^{i}\|^{2}_{P_{i,1}^{-1}}
\end{align*}
where
\begin{align}\label{eq:p1_1}
    P_{i,1}^{-1} &= L_{i,0}^{\mathrm{T}}\breve{P}_{i,0}^{-1}L_{i,0}+(I-A_{ii}L_{i,0})^{\mathrm{T}}Q_{i}^{-1}(I-A_{ii}L_{i,0})\nonumber\\
    &=L_{i,0}^{\mathrm{T}}(\breve{P}_{i,0}^{-1}+A_{ii}^{\mathrm{T}}Q_{i}^{-1}A_{ii})L_{i,0}-Q_{i}^{-1}A_{ii}L_{i,0}- L_{i,0}^{\mathrm{T}}A_{ii}^{\mathrm{T}}Q_{i}^{-1}+Q_{i}^{-1}
\end{align}
According to Lemma \ref{inversion}, we have
\begin{equation}\label{eq:l}
    L_{i,0} = (\breve{P}_{i,0}^{-1}+A_{ii}^{\mathrm{T}}Q_{i}^{-1}A_{ii})^{-1}A_{ii}^{\mathrm{T}}Q_{i}^{-1}
\end{equation}
Then, the first term on the right-hand-side of \eqref{eq:p1_1} can be rewritten as:
    \begin{align}\label{eq:lpl}
        &\quad L_{i,0}^{\mathrm{T}}(\breve{P}_{i,0}^{-1}+A_{ii}^{\mathrm{T}}Q_{i}^{-1}A_{ii})L_{i,0}\nonumber\\
        &=Q_{i}^{-1}A_{ii}(\breve{P}_{i,0}^{-1}+A_{ii}^{\mathrm{T}}Q_{i}^{-1}A_{ii})^{-1}(\breve{P}_{i,0}^{-1}+A_{ii}^{\mathrm{T}}Q_{i}^{-1}A_{ii})(\breve{P}_{i,0}^{-1}+A_{ii}^{\mathrm{T}}Q_{i}^{-1}A_{ii})^{-1}A_{ii}^{\mathrm{T}}Q_{i}^{-1}\nonumber\\
        &=Q_{i}^{-1}A_{ii}(\breve{P}_{i,0}^{-1}+A_{ii}^{\mathrm{T}}Q_{i}^{-1}A_{ii})^{-1}A_{ii}^{\mathrm{T}}Q_{i}^{-1},
    \end{align}
Substituting \eqref{eq:l} and \eqref{eq:lpl} into \eqref{eq:p1_1} yields
\begin{align}\label{eq:p1_2}
    P_{i,1}^{-1} &= Q_{i}-Q_{i}^{-1}A_{ii}(\breve{P}_{i,0}^{-1}+A_{ii}^{\mathrm{T}}Q_{i}^{-1}A_{ii})^{-1}A_{ii}^{\mathrm{T}}Q_{i}^{-1}\nonumber\\
    & = (Q_{i}+A_{ii}\breve{P}_{i,0}A_{ii}^{\mathrm{T}})^{-1}
\end{align}
Therefore, it is further obtain $P_{i,1}=Q_{i}+A_{ii}\breve{P}_{i,0}A_{ii}^{\mathrm{T}}$.
By minimizing $V^{i}_{1}$ in \eqref{eq:upd_333} with respect to $\hat{x}_{0}^{i}$, the arrival cost $V^{i,o}_{1}$ for the $i$th estimator can be derived as follows:
\begin{equation*}\label{eq:upd6}
  V_{1}^{i,o} = \min_{\hat{x}_{0}^{i}}V_{1}^{i}=\|\hat{x}_{1}^{i}-\bar{x}_{1}^{i}\|^{2}_{P_{i,1}^{-1}}
\end{equation*}

Next, let us proceed to the arrival cost $V_{2}^{i,o}$ by addressing the following optimization problem:
\normalsize
\begin{subequations}\label{eq:upd7}
\begin{equation}\label{eq:upd7_1}
V_{2}^{i,o}=\min_{\{\hat{x}_{j}^{i}\}_{j=0}^{j=1}}V_{2}^{i}~\text{subject to \eqref{eq:upd0_2}, \eqref{eq:upd0_4}, and\eqref{eq:upd0_3} for}~j=0,1
\end{equation}
\normalsize
where
\normalsize
\begin{align}\label{eq:upd7_2}
V_{2}^{i}&=\|\hat{x}_{0}^{i}-\bar{x}_{0}^{i}\|^{2}_{P_{i,0}^{-1}}+\|\hat{v}_{0}^{[i]}\|^{2}_{R^{-1}}+\|\hat{w}_{0}^{i}\|^{2}_{Q_{i}^{-1}}+\|\hat{v}_{1}^{[i]}\|^{2}_{R^{-1}}+\|\hat{w}_{1}^{i}\|^{2}_{Q_{i}^{-1}}
\end{align}
\end{subequations}
\normalsize
By following the same procedure adopted to derive \eqref{eq:upd_2}, $V_{2}^{i}$ can be expressed as:
\begin{align*}
  V_{2}^{i} &=  \|\hat{x}_{0}^{i}-\breve{x}_{0}^{i}\|^{2}_{\breve{P}_{i,0}^{-1}}+\|\hat{v}_{1}^{[i]}\|^{2}_{R^{-1}}+\|\hat{w}_{0}^{i}\|^{2}_{Q_{i}^{-1}}+\|\hat{w}_{1}^{i}\|^{2}_{Q_{i}^{-1}},
\end{align*}
\normalsize
up to a constant term.
Based on \eqref{eq:upd0_3}, it is further obtained that
\normalsize
\begin{align*}
  V_{2}^{i}& = \|\hat{x}_{0}^{i}-\breve{x}_{0}^{i}\|^{2}_{\breve{P}_{i,0}^{-1}}+\|y_{1}-C(A_{[:,i]}\hat{x}_{0}^{i}+\sum_{l\in\mathbb{I}\setminus\{i\}}A_{[:,l]}\tilde{x}_{0}^{l})\|^{2}_{R^{-1}}
 +\|\hat{w}_{0}^{i}\|^{2}_{Q_{i}^{-1}}+\|\hat{w}_{1}^{i}\|^{2}_{Q_{i}^{-1}}\\
&=\|\hat{x}_{0}^{i}-\check{x}_{0}^{i}\|^{2}_{\check{P}_{i,0}^{-1}}+\check{\pi}_{0}^{i}+\|\hat{w}_{0}^{i}\|^{2}_{Q_{i}^{-1}}+\|\hat{w}_{1}^{i}\|^{2}_{Q_{i}^{-1}}
\end{align*}
\normalsize
where
\normalsize
\begin{align*}
   \check{P}_{i,0} & = \breve{P}_{i,0}- \breve{P}_{i,0}A_{[:,i]}^{\mathrm{T}}C^{\mathrm{T}}(CA_{[:,i]}\breve{P}_{i,0}A_{[:,i]}^{\mathrm{T}}C^{\mathrm{T}}+R)^{-1}CA_{[:,i]}\breve{P}_{i,0}\\
   \check{x}_{0}^{i}& =  \breve{x}_{0}^{i}+\breve{P}_{i,0}A_{[:,i]}^{\mathrm{T}}C^{\mathrm{T}}(CA_{[:,i]}\breve{P}_{i,0}A_{[:,i]}^{\mathrm{T}}C^{\mathrm{T}}+R)^{-1}(y_{1}-C(A_{[:,i]}\breve{x}_{0}^{i}+\sum_{l\in\mathbb{I}\setminus\{i\}}A_{[:,l]}\tilde{x}_{0}^{l}))\\
   \check{\pi}_{0}^{i}& = \|\check{x}_{0}^{i}-\breve{x}_{0}^{i}\|^{2}_{\breve{P}_{i,0}^{-1}}+\|y_{1}-C(A_{[:,i]}\check{x}_{0}^{i}+\sum_{l\in\mathbb{I}\setminus\{i\}}A_{[:,l]}\tilde{x}_{0}^{l})\|^{2}_{R^{-1}}
\end{align*}
\normalsize
$\check{\pi}_{0}^{i}$ is a constant and is neclected in deriving $V_{2}^{i}$. Then, by applying the same procedure used to obtain \eqref{eq:upd_333} and \eqref{eq:p1_2}, it is derived that:
\normalsize
\begin{align}\label{eq:upd10}
  V_{2}^{i}& =\|\hat{x}_{0}^{i}-\check{x}_{0}^{i}\|^{2}_{\check{P}_{i,0}^{-1}}+\|\hat{x}_{1}^{i}-A_{ii}\hat{x}_{0}^{i}-\sum_{l\in\mathbb{I}\setminus\{i\}}A_{il}\tilde{x}_{0}^{l}\|^{2}_{Q_{i}^{-1}}+\|\hat{w}_{1}^{i}\|^{2}_{Q_{i}^{-1}}\nonumber\\
  & = \|\hat{x}_{0}^{i}-\bar{\sigma}_{0}^{i}\|^{2}_{\bar{H}_{0}^{i}}+\bar{\pi}_{0}^{i}+\|\hat{w}_{1}^{i}\|^{2}_{Q_{i}^{-1}}
\end{align}
\normalsize
where
\normalsize
    \begin{align*}
\bar{H}_{0}^{i}&=\check{P}_{i,0}^{-1}+A_{ii}^{\mathrm{T}}Q_{i}^{-1}A_{ii}\\
      \bar{\sigma}^{i}_{0} & = \check{x}_{0}^{i}+\check{P}_{i,0}A_{ii}^{\mathrm{T}}(A_{ii}\check{P}_{i,0}A_{ii}^{\mathrm{T}}+Q_{i})^{-1}(\hat{x}_{1}^{i}-A_{ii}\check{x}_{0}^{i}-\sum_{l\in\mathbb{I}\setminus\{i\}}A_{il}\tilde{x}_{0}^{l}) \\
      \breve{P}_{i,1}  & =Q_{i}+A_{ii}\check{P}_{i,0}A_{ii}^{\mathrm{T}} \\
      \bar{\pi}^{i}_{0} & =
    \|\hat{x}_{1}^{i}-A_{ii}\check{x}_{0}^{i}-\sum_{l\in\mathbb{I}\setminus\{i\}}A_{il}\tilde{x}_{0}^{l}\|^{2}_{
    \breve{P}_{i,1}^{-1}}
\end{align*}
\normalsize
Let us further define $\breve{x}_{1}^{i} := A_{ii}\check{x}_{0}^{i}+\sum_{l\in\mathbb{I}\setminus\{i\}}A_{il}\tilde{x}_{0}^{l}$. Based on \eqref{eq:upd10}, we apply the forward dynamic programming method to the optimization problem \eqref{eq:upd7}, and it is obtained that
\begin{align} \label{eq:upd11}
   V_{2}^{i,o} &= \min_{\hat{x}_{1}^{i}}\big\{\min_{\hat{x}_{0}^{i}}\{\|\hat{x}_{0}^{i}-\bar{\sigma}_{0}^{i}\|^{2}_{\bar{H}_{0}^{i}}\}+\|\hat{x}_{1}^{i}-\breve{x}_{1}^{i}\|^{2}_{
    \breve{P}_{i,1}^{-1}}+\|\hat{w}_{1}^{i}\|^{2}_{Q_{i}^{-1}}\big\}~\text{subject to \eqref{eq:upd0_2} for}~j=1
  \end{align}
\normalsize
It is noted that the optimal value of $\hat{x}_{0}^{i}$ for optimization problem \eqref{eq:upd11} is $\bar{\sigma}_{0}^{i}$. Therefore, it is further derived
\begin{align}\label{eq:upd111}
   V_{2}^{i,o} &= \min_{\hat{x}_{1}^{i}}\big\{\|\hat{x}_{1}^{i}-\breve{x}_{1}^{i}\|^{2}_{
    \breve{P}_{i,1}^{-1}}+\|\hat{w}_{1}^{i}\|^{2}_{Q_{i}^{-1}}\big\}~\text{subject to \eqref{eq:upd0_2} for}~j=1
  \end{align}
\normalsize
In the following, according to \eqref{eq:upd0_2}, the objective function of \eqref{eq:upd111} can be rewritten as
\begin{align}\label{eq:18}
    V_{2}^{i}&= \|\hat{x}_{1}^{i}-\breve{x}_{1}^{i}\|^{2}_{\breve{P}_{i,1}^{-1}}+\|\hat{x}_{2}^{i}-A_{ii}\hat{x}_{1}^{i}-\sum_{l\in\mathbb{I}\setminus\{i\}}A_{il}\tilde{x}_{1}^{l}\|^{2}_{Q_{i}^{-1}}\nonumber\\
     & = \|\hat{x}_{1}^{i}-\breve{\sigma}_{1}^{i}\|^{2}_{H_{1}^{i}}+\breve{\pi}_{1}^{i}
\end{align}
 \normalsize
where
\normalsize
    \begin{align*}
H_{1}^{i}&=\breve{P}_{i,1}^{-1}+A_{ii}^{\mathrm{T}}Q_{i}^{-1}A_{ii}\\
      \breve{\sigma}^{i}_{1} & = \breve{x}_{1}^{i}+\breve{P}_{i,1}A_{ii}^{\mathrm{T}}(A_{ii}\breve{P}_{i,1}A_{ii}^{\mathrm{T}}+Q_{i})^{-1}(\hat{x}_{2}^{i}-A_{ii}\breve{x}_{1}^{i}-\sum_{l\in\mathbb{I}\setminus\{i\}}A_{il}\tilde{x}_{1}^{l}) \\
      P_{i,2}  & =Q_{i}+A_{ii}\breve{P}_{i,1}A_{ii}^{\mathrm{T}} \\
      \breve{\pi}^{i}_{1} & =
    \|\hat{x}_{2}^{i}-A_{ii}\breve{x}_{1}^{i}-\sum_{l\in\mathbb{I}\setminus\{i\}}A_{il}\tilde{x}_{1}^{l}\|^{2}_{
    P_{i,2}^{-1}}
\end{align*}
\normalsize
Let $\bar{x}_{2}^{i}:=A_{ii}\breve{x}_{1}^{i}+\sum_{l\in\mathbb{I}\setminus\{i\}}A_{il}\tilde{x}_{1}^{l}$. By minimizing $V_{2}^{i}$ in \eqref{eq:18} with respect to $\hat{x}_{1}^{i}$, the arrival cost $V_{2}^{i,o}$ in \eqref{eq:upd111} of the $i$th estimator can be obtained as:
\begin{equation*}
  V_{2}^{i,o} =\min_{\hat{x}_{1}^{i}}V_{2}^{i}=\|\hat{x}_{2}^{i}-\bar{x}_{2}^{i}\|^{2}_{P_{i,2}^{-1}}
\end{equation*}

\begin{figure}
  \centering
  \includegraphics[width=1\textwidth]{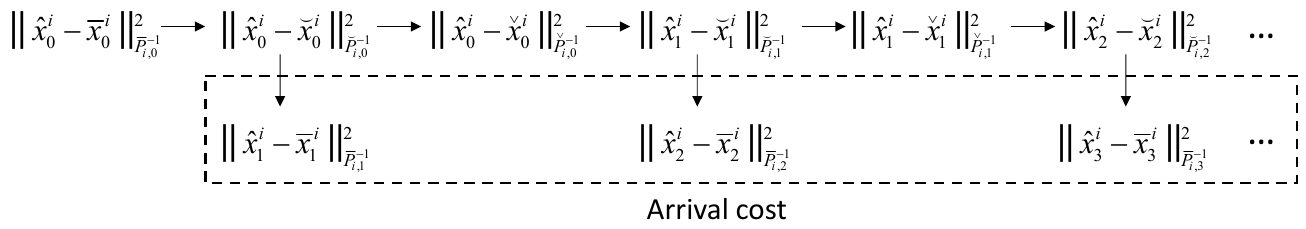}
  \caption{A flowchart of procedures for obtaining the arrival cost}\label{matrix}
\end{figure}

By iteratively applying the same procedure, the arrival cost $V_{k}^{i,o}$ for the subsequent sampling instants can be derived. Specifically, the recursive expression of the arrival cost $V_{k}^{i,o}$ for the $i$th subsystem is presented as follows:
\begin{equation}\label{eq:upd:arrival cost_1}
  V_{k}^{i,o} =\|\hat{x}_{k}^{i}-\bar{x}_{k}^{i}\|^{2}_{P_{i,k}^{-1}}
\end{equation}
where the update of arrival cost has three steps, as illustrated in Figure \ref{matrix}. The details of the three steps are as follows:
\begin{subequations}\label{eq:upd:arrival cost_2}
    \begin{itemize}
  \item From $\|\hat{x}_{k}^{i}-\breve{x}_{k}^{i}\|^{2}_{\breve{P}_{i,k}^{-1}}$ to $\|\hat{x}_{k}^{i}-\check{x}_{k}^{i}\|^{2}_{\check{P}_{i,k}^{-1}}$:
  \normalsize
    \begin{align}
   \check{P}_{i,k} & = \breve{P}_{i,k}- \breve{P}_{i,k}A_{[:,i]}^{\mathrm{T}}C^{\mathrm{T}}(CA_{[:,i]}\breve{P}_{i,k}A_{[:,i]}^{\mathrm{T}}C^{\mathrm{T}}+R)^{-1}CA_{[:,i]}\breve{P}_{i,k}\label{eq:20_4}\\
   \check{x}_{k}^{i}& = \breve{x}_{k}^{i}+ \breve{P}_{i,k}A_{[:,i]}^{\mathrm{T}}C^{\mathrm{T}}(CA_{[:,i]}\breve{P}_{i,k}A_{[:,i]}^{\mathrm{T}}C^{\mathrm{T}}+R)^{-1}(y_{k}-C(A_{[:,i]}\breve{x}_{k-1}^{i}+\sum_{l\in\mathbb{I}\setminus\{i\}}A_{[:,l]}\tilde{x}_{k-1}^{l}))\label{eq:20_1}
\end{align}
        \normalsize
  \item From $\|\hat{x}_{k}^{i}-\check{x}_{k}^{i}\|^{2}_{\check{P}_{i,k}^{-1}}$ to $\|\hat{x}_{k+1}^{i}-\breve{x}_{k+1}^{i}\|^{2}_{\breve{P}_{i,k+1}^{-1}}$:
  \normalsize
  \begin{align}
    \breve{P}_{i,k+1} & =Q_{i}+A_{ii}\check{P}_{i,k}A_{ii}^{\mathrm{T}}\label{eq:20_5}\\
    \breve{x}_{k+1}^{i} &=A_{ii}\check{x}_{k}^{i}+\sum_{l\in\mathbb{I}\setminus\{i\}}A_{il}\tilde{x}^{l}_{k}\label{eq:20_2}
  \end{align}
  \normalsize
  \item From $\|\hat{x}_{k}^{i}-\breve{x}_{k}^{i}\|^{2}_{\breve{P}_{i,k}^{-1}}$ to the arrival cost $\|\hat{x}_{k+1}^{i}-\bar{x}_{k+1}^{i}\|^{2}_{{P}_{i,k+1}^{-1}}$:
  \normalsize
  \begin{align}
   P_{i,k+1}  & =Q_{i}+A_{ii}\breve{P}_{i,k}A_{ii}^{\mathrm{T}}\label{eq:20_6}\\
   \bar{x}_{k+1}^{i}  & = A_{ii}\breve{x}_{k}^{i}+\sum_{l\in\mathbb{I}\setminus\{i\}}A_{il}\tilde{x}^{l}_{k}\label{eq:20_3}
  \end{align}
\end{itemize}
\end{subequations}
\normalsize

By leveraging the arrival cost obtained in \eqref{eq:upd:arrival cost_2}, the proposed DMHE design for the linear system in \eqref{cmodel} is completed. Specifically, at sampling instant $k$, the $i$th estimator of the proposed DMHE algorithm solves an optimization problem as follows:
\begin{subequations}\label{dmhe_revised_1}
\begin{align}
&\min_{\{\hat{x}^{i}_j\}_{j=k-N}^{k}} \sum_{j=k-N}^{k-1}\|\hat{w}^{i}_{j}\|^2_{Q_{i}^{-1}}+\sum_{j=k-N}^{k}\|\hat{v}^{[i]}_{j}\|^2_{R^{-1}}+\|\hat{x}^{i}_{k-N}-\bar{x}^{i}_{k-N}\|_{P_{i,k-N}^{-1}}^{2} \nonumber\\
&\quad\quad\mathrm{s.t.}~\hat{x}_{j+1}^{i}=A_{ii}\hat{x}_{j}^{i}+\sum_{l\in\mathbb{I}\setminus\{i\}}A_{il}\tilde{x}_{j}^{l}+\hat{w}_{j}^{i}\label{eq:dmhe_1_1}\\
&\quad\quad\quad~~ y_{k-N}= C_{[:,i]}\hat{x}_{k-N}^{i}+\sum_{l\in\mathbb{I}\setminus\{i\}}C_{[:,l]}\tilde{x}_{k-N}^{l}+\hat{v}_{k-N}^{[i]}\label{eq:dmhe_2_1}\\
&\quad\quad\quad~~ y_{j+1}= C(A_{[:,i]}\hat{x}_{j}^{i}+\sum_{l\in\mathbb{I}\setminus\{i\}}A_{[:,l]}\tilde{x}_{j}^{l})+v_{j+1}^{[i]},~j=k-N,\ldots,k-1\label{eq:dmhe_3_1}\\
&\quad\quad\quad~~ \hat{x}_{k}^{i}\in\mathbb{X}^{i}\label{eq:dmhe_4_1}
\end{align}
\end{subequations}
{\color{black}
In \eqref{dmhe_revised_1}, $\|\hat{w}^{i}_{j}\|^2_{Q_{i}^{-1}}$ penalizes the deviation of estimated subsystem states from the nominal process subsystem model over an estimation window, ensuring that the state estimates can follow the system dynamics. $\|\hat{v}^{[i]}_{j}\|^2_{R^{-1}}$  penalizes the discrepancy between the predicted measurements and actual measurements to ensure that the estimated states comply with measured outputs. The arrival cost $\|\hat{x}^{i}_{k-N}-\bar{x}^{i}_{k-N}\|_{P_{i,k-N}^{-1}}^{2}$ summarizes the previous information that is not considered within the current estimation window.}

\subsection{Stability analysis}
In this section, inspired by \cite{farina2024moving}, we perform the stability analysis for the proposed DMHE method in \eqref{dmhe_revised_1}, where the arrival costs of local estimators are approximated using a recursive approach for linear systems in \eqref{cmodel} with state constraints. Before proceeding further, we introduce several matrices and one lemma as follows:

\begin{align*}
    A_{i}^{*} &= \left[\begin{array}{c}
         A_{1i}\\
        \vdots\\
         A_{ni}
    \end{array}\right],
    \tilde{A}_{i}=\left[\begin{array}{ccccccc}
       A_{11}&\cdots&A_{1,i-1}&0&A_{1,i+1}&\cdots  &A_{1n}  \\
        \vdots & &\vdots&\vdots&\vdots& &\vdots\\
        A_{n1}&\cdots&A_{n,i-1}&0&A_{n,i+1}&\cdots  &A_{nn}  \\
    \end{array}\right],
    C_{i}^{*} = \left[\begin{array}{c}
         0\\
        \vdots\\
        C_{ii}\\
        \vdots\\
        0
    \end{array}\right],\\
    \tilde{C}_{i}&=\mathrm{diag}\{C_{11},\ldots,C_{i-1,i-1},0,C_{i+1,i+1},\ldots,C_{nn}\}
\end{align*}

\begin{lemma}\label{lem3}(\cite{farina2024moving})
If the rank of matrix $C$ is equal to the dimension of vector $a$, then it holds that
\begin{equation*}
    \|Cx-a\|^{2}_{A^{-1}} = \|x-b\|^{2}_{C^{\mathrm{T}}A^{-1}C}
\end{equation*}
where $b=(C^{\mathrm{T}}A^{-1}C)^{-1}C^{\mathrm{T}}A^{-1}a$.
\end{lemma}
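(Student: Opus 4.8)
The plan is to prove the identity by completing the square on the left-hand side and then isolating the single place where the rank hypothesis is actually used. Set $M := C^{\mathrm{T}}A^{-1}C$, and observe that the vector $b$ in the statement satisfies $Mb = C^{\mathrm{T}}A^{-1}a$. Expanding $\|Cx-a\|_{A^{-1}}^{2} = x^{\mathrm{T}}Mx - 2x^{\mathrm{T}}C^{\mathrm{T}}A^{-1}a + a^{\mathrm{T}}A^{-1}a$ and rewriting the first two terms as $(x-b)^{\mathrm{T}}M(x-b) - b^{\mathrm{T}}Mb$, a routine completion of the square gives
\begin{equation*}
\|Cx-a\|_{A^{-1}}^{2} = \|x-b\|_{C^{\mathrm{T}}A^{-1}C}^{2} + \Big(a^{\mathrm{T}}A^{-1}a - a^{\mathrm{T}}A^{-1}C\,(C^{\mathrm{T}}A^{-1}C)^{-1}C^{\mathrm{T}}A^{-1}a\Big),
\end{equation*}
so the claim reduces to showing that the bracketed residual constant vanishes.

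To dispatch the residual I would symmetrize with the symmetric positive-definite square root $A^{-1/2}$, which exists since $A\succ 0$. Writing $\bar{a} := A^{-1/2}a$ and $\bar{C} := A^{-1/2}C$, the residual equals $\bar{a}^{\mathrm{T}}\big(I - \bar{C}(\bar{C}^{\mathrm{T}}\bar{C})^{-1}\bar{C}^{\mathrm{T}}\big)\bar{a} = \|(I-\Pi)\bar{a}\|^{2}$, where $\Pi := \bar{C}(\bar{C}^{\mathrm{T}}\bar{C})^{-1}\bar{C}^{\mathrm{T}}$ is the orthogonal projector onto $\mathrm{range}(\bar{C})$. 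Since $A^{-1/2}$ is nonsingular, $\mathrm{rank}(\bar{C}) = \mathrm{rank}(C)$; and the hypothesis $\mathrm{rank}(C) = \dim(a)$ says exactly that $\mathrm{range}(\bar{C})$ is a subspace of $\mathbb{R}^{\dim(a)}$ of full dimension, hence $\mathrm{range}(\bar{C}) = \mathbb{R}^{\dim(a)}$, so $\Pi = I$ and the residual is $0$. Equivalently, under this rank condition $C$ is square and invertible, $b = C^{-1}a$, and the identity follows immediately from $x-b = C^{-1}(Cx-a)$; the projector computation is just the more transparent bookkeeping of the same fact.

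The algebra is entirely routine, so the only point that requires care is the role of the rank assumption, which enters twice: once to ensure that $(C^{\mathrm{T}}A^{-1}C)^{-1}$ exists at all, so that $b$ and the weighted norm $\|\cdot\|_{C^{\mathrm{T}}A^{-1}C}^{2}$ are well posed, and once --- in the form $\mathrm{range}(A^{-1/2}C) = \mathbb{R}^{\dim(a)}$ --- to force $\Pi = I$. If one wished to relax this to $C$ merely having full row rank but with strictly more columns than rows, then $C^{\mathrm{T}}A^{-1}C$ is singular, and the statement would have to be read with a pseudo-inverse in place of $(C^{\mathrm{T}}A^{-1}C)^{-1}$ and with $\|\cdot\|_{C^{\mathrm{T}}A^{-1}C}^{2}$ understood as a seminorm; the same square-completion-plus-projector argument still goes through after restricting to $\mathrm{range}(C^{\mathrm{T}})$. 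Since the lemma is invoked only in the form in which it is stated in \cite{farina2024moving}, I would present the square-root/projector version above.
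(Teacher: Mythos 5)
The paper does not actually prove this lemma --- it is stated with a citation to \cite{farina2024moving} and used as a black box --- so there is no in-paper proof to compare against. Your argument is correct and self-contained: the completion of the square with $M=C^{\mathrm{T}}A^{-1}C$ and $Mb=C^{\mathrm{T}}A^{-1}a$ gives the identity up to the residual $a^{\mathrm{T}}A^{-1}a-a^{\mathrm{T}}A^{-1}C(C^{\mathrm{T}}A^{-1}C)^{-1}C^{\mathrm{T}}A^{-1}a$, and your projector computation correctly shows this residual is $\|(I-\Pi)\bar a\|^{2}$ with $\Pi$ the orthogonal projector onto $\mathrm{range}(A^{-1/2}C)$, which is the identity precisely under the stated rank hypothesis. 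You are also right that, taken literally, well-posedness of $(C^{\mathrm{T}}A^{-1}C)^{-1}$ together with full row rank forces $C$ to be square and invertible, in which case the lemma collapses to $x-b=C^{-1}(Cx-a)$.

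One observation worth adding: in the place where the paper actually invokes the lemma (passing from \eqref{eq:38} to \eqref{eq:min_u_Y}), the matrix $C_{4,i}$ is \emph{tall} with full column rank, so the hypothesis $\mathrm{rank}(C)=\dim(a)$ is not satisfied there; what the paper really uses is your intermediate identity
\begin{equation*}
\|Cx-a\|_{A^{-1}}^{2}=\|x-b\|_{C^{\mathrm{T}}A^{-1}C}^{2}+\|(I-\Pi)\bar a\|^{2},
\end{equation*}
with the nonnegative residual absorbed into the ``$\mathrm{const}$'' term of \eqref{eq:min_u_Y}. So your square-completion form (rank condition replaced by full column rank, equality up to an additive constant independent of $x$) is in fact the statement the downstream argument needs, and it would be worth recording the lemma in that form.
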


To establish the stability of the proposed DMHE algorithm, we modify the objective function of \eqref{dmhe_revised_1} by incorporating a constant term $\Phi^{i,*}_{k-1}$ and rewrite the $i$th estimator of the proposed DMHE method in \eqref{dmhe_revised_1}.
Specifically,
at sampling instant $k$, the $i$th estimator of the proposed DMHE approach in \eqref{dmhe_revised_1} can be written as:
\begin{subequations}\label{dmhe_revised}
\begin{align}
&\quad\quad\Phi^{i,*}_{k}=\min_{\{\hat{x}^{i}_j\}_{j=k-N}^{k}} \Phi^{i,\mathrm{MHE}}_{k}\nonumber\\
&\mathrm{s.t.}~\,\hat{x}_{j+1}^{i}=A_{ii}\hat{x}_{j}^{i}+\sum_{l\in\mathbb{I}\setminus\{i\}}A_{il}\tilde{x}_{j}^{l}+\hat{w}_{j}^{i}\label{eq:dmhe_1}\\
&\quad~~ y_{k-N}= C^{*}_{i}\hat{x}_{k-N}^{i}+\tilde{C}_{i}\tilde{x}_{k-N}+\hat{v}_{k-N}^{[i]}\label{eq:dmhe_2}\\
&\quad~~~\, y_{j+1}= C(A^{*}_{i}\hat{x}_{j}^{i}+\tilde{A}_{i}\tilde{x}_{j})+\hat{v}_{j+1}^{[i]},~j=k-N,\ldots,k-1\label{eq:dmhe_3}\\
&\quad~~\quad~ \hat{x}_{k}^{i}\in\mathbb{X}^{i}\label{eq:dmhe_4}
\end{align}
where
\begin{equation}\label{eq:22e}
    \Phi^{i,\mathrm{MHE}}_{k}=\sum_{j=k-N}^{k-1}\|\hat{w}^{i}_{j}\|^2_{Q_{i}^{-1}}+\sum_{j=k-N}^{k}\|\hat{v}^{[i]}_{j}\|^2_{R^{-1}}+\|\hat{x}^{i}_{k-N}-\bar{x}^{i}_{k-N}\|_{P_{i,k-N}^{-1}}^{2} +\Phi^{i,*}_{k-1}
\end{equation}
\end{subequations}
where $\Phi^{i,*}_{k-1}$ is the optimal value of the $i$th estimator of the proposed DMHE approach calculated at sampling instant $k-1$.
{\color{black}The inclusion of the constant $\Phi^{i,*}_{k-1}$ ensures that the sequence $\{\Phi_{k}^{i,*}\}$ is non-decreasing.
By leveraging the monotonicity and boundedness of $\{\Phi_{k}^{i,*}\}$, the sequence $\{\Phi_{k}^{i,*}\}$ is convergent, as demonstrated in the proof of Proposition \ref{lem:6}. This convergence forms the basis for proving the stability of the proposed linear DMHE algorithm in Theorem \ref{theorem}.}
It is noted that $\Phi^{i,*}_{k-1}$ is a known value at sampling instant $k$ and can be disregarded when solving the optimization problem \eqref{dmhe_revised}. Therefore, the reformulated DMHE method in \eqref{dmhe_revised} is equivalent to the proposed DMHE design in \eqref{dmhe_revised_1}.

By creating augmented vectors $\hat{x}_{j}=\mathrm{col}\{\hat{x}_{j}^{1},\ldots,\hat{x}_{j}^{n}\}$, $\tilde{x}_{j}=\mathrm{col}\{\tilde{x}_{j}^{1},\ldots,\tilde{x}_{j}^{n}\}$, $\hat{w}_{j}=\mathrm{col}\{\hat{w}_{j}^{1},\ldots,\hat{w}_{j}^{n}\}$, $Y_{j}=\mathrm{col}\{\underbrace{y_{j},\ldots, y_{j}}_{n}\}$, and $\hat{V}_{j}=\mathrm{col}\{\hat{v}_{j}^{[1]},\ldots$, $\hat{v}_{j}^{[n]}\}$. The subsystem model in \eqref{eq:dmhe_1}-\eqref{eq:dmhe_3} can be concatenated to form a compact collective estimation model:
\begin{subequations}\label{collective_estimation}
\begin{align}
\hat{x}_{j+1}&=A_{d}\hat{x}_{j}+A_{r}\tilde{x}_{j}+\hat{w}_{j}\label{eq:23a}\\
Y_{k-N}& = C^{*}\hat{x}_{k-N}+\tilde{C}\tilde{x}_{k-N}+\hat{V}_{k-N}\\
Y_{j+1}&= \mathbf{C}(A^{*}\hat{x}_{j}+\tilde{A}\tilde{x}_{j})+\hat{V}_{j+1}, ~j=k-N,\ldots,k-1\label{eq:23c}
\end{align}
\end{subequations}
where $A_{d}=\mathrm{diag}\{A_{11}, \ldots,A_{nn}\}$; $A_{r}=A-A_{d}$; $C^{*}=\mathrm{diag}\{C^{*}_{1}, \ldots,C^{*}_{n}\}$; $\tilde{C}=\mathrm{col}\{\tilde{C}_{1},\ldots,\tilde{C}_{n}\}$; $\mathbf{C}=\mathrm{diag}\{\underbrace{C,\ldots,C}_{n}\}$, $A^{*}=\mathrm{diag}\{A^{*}_{1}, \ldots,A^{*}_{n}\}$; $\tilde{A}=\mathrm{col}\{\tilde{A}_{1},\ldots,\tilde{A}_{n}\}$.
We further define a collective form of the objective function by summarizing the local objective functions $\Phi^{i,\mathrm{MHE}}_{k}$ for all $i\in\mathbb{I}$:
\begin{align}\label{eq:collective_24}
    \Phi_{k}^{\mathrm{MHE}}&=\sum_{i\in\mathbb{I}}\Phi^{i,\mathrm{MHE}}_{k}\nonumber\\
    &=\|\hat{x}_{k-N}-\bar{x}_{k-N}\|^{2}_{P_{k-N}^{-1}}+\sum_{j=k-N}^{k-1}\|\hat{w}_{j}\|^{2}_{Q^{-1}}+\sum_{j=k-N}^{k}\|\hat{V}_{j}\|^{2}_{\mathbf{R}^{-1}}+\Phi^{*}_{k-1}
\end{align}
where $P_{k-N}=\mathrm{diag}\{P_{1,k-N},\ldots,P_{n,k-N}\}$; $Q=\mathrm{diag}\{Q_{1}, \ldots,Q_{n}\}$; $\mathbf{R}=\mathrm{diag}\{\underbrace{R, \ldots,R}_{n}\}$; $\Phi^{*}_{k-1}=\sum_{i\in\mathbb{I}}\Phi^{i,*}_{k-1}$.
It is worth mentioning that the solution of the proposed DMHE design in \eqref{dmhe_revised} is equivalent to solving the following optimization problem:
\begin{align}\label{eq:collective_25}
    \Phi^{*}_{k}&=\min_{\{\hat{x}_j\}_{j=k-N}^{k}}\Phi^{\mathrm{MHE}}_{k}~\text{subject to \eqref{collective_estimation} and } \hat{x}\in\mathbb{X}
\end{align}

Similar to the stability analysis conducted in \cite{farina2024moving,farina2011moving}, for a sequence $z_{j}^{i}\in\mathbb{R}^{n_{x^{i}}}$, $j=k-N+1,\ldots,k$, we define the transit cost $\Phi^{i,\mathrm{MHE}}_{[k-N+1,k]/k}(\{z_{j}^{i}\}_{j=k-N+1}^{k})$ of the proposed DMHE approach in \eqref{dmhe_revised} for the subsystem $i$:
\begin{align}
    \Phi^{i,\mathrm{MHE}}_{[k-N+1,k]/k}(\{z_{j}^{i}\}_{j=k-N+1}^{k})&=\min_{\{\hat{x}^{i}_j\}_{j=k-N}^{k}}\Phi^{i,\mathrm{MHE}}_{k}~\text{subject to \eqref{dmhe_revised} and~}\hat{x}_{j}^{i}=z_{j}^{i} ~\text{for}~ j=k-N+1,\ldots,k
\end{align}
Let $z_{j}=\mathrm{col}\{z_{j}^{1},\ldots,z_{j}^{n}\}\in\mathbb{R}^{n_{x}}$, the collective transit cost of  \eqref{eq:collective_25} takes the following form:
\begin{align*}
    \Phi^{\mathrm{MHE}}_{[k-N+1,k]/k}(\{z_{j}\}_{j=k-N+1}^{k})&=\min_{\{\hat{x}_j\}_{j=k-N}^{k}}\Phi^{\mathrm{MHE}}_{k}~\text{subject to \eqref{collective_estimation} and}~ \hat{x}_{j}=z_{j}~\text{for}~ j=k-N+1,\ldots,k
\end{align*}
such that
\begin{equation*}
    \Phi^{\mathrm{MHE}}_{[k-N+1,k]/k}(\{z_{j}\}_{j=k-N+1}^{k})=\sum_{i\in\mathbb{I}}\Phi^{i,\mathrm{MHE}}_{[k-N+1,k]/k}(\{z_{j}^{i}\}_{j=k-N+1}^{k})
\end{equation*}

Before introducing Proposition \ref{lem:4} and Lemma \ref{Lemma:14}, an unconstrained DMHE design is formulated as follows:
\begin{align}\label{dmhe_u}
\Phi^{i,u}_{k}&=\min_{\{\hat{x}^{i}_j\}_{j=k-N}^{k}} \Phi^{i,\mathrm{MHE}}_{k}~\text{subject to \eqref{eq:dmhe_1}-\eqref{eq:dmhe_3}}
\end{align}
The associated transit cost of the unconstrained DMHE approach in \eqref{dmhe_u} for the $i$th subsystem is described as follows:
\begin{align}\label{eq:transit_u}
    \Phi^{i,u}_{[k-N+1,k]/k}(\{z_{j}^{i}\}_{j=k-N+1}^{k})&=\min_{\{\hat{x}^{i}_j\}_{j=k-N}^{k}}\Phi^{i,\mathrm{MHE}}_{k}~\text{subject to \eqref{eq:dmhe_1}-\eqref{eq:dmhe_3}}\nonumber\\
    &\quad\text{and}~ \hat{x}_{j}^{i}=z_{j}^{i}~\text{for}~ j=k-N+1,\ldots,k
\end{align}

\begin{proposition}\label{lem:4}
    Let $\{\hat{x}_{j}^{i,u}\}_{j=k-N}^{k}$ be the solution of the unconstrained DMHE problem in \eqref{dmhe_u} for the subsystem $i$. Then there exist a positive-define matrix $H^{i}_{k}$, such that the transit cost $\Phi^{i,u}_{[k-N+1,k]/k}$ $(\{z_{j}^{i}\}_{j=k-N+1}^{k})$ in \eqref{eq:transit_u} is given by the following equation:
\begin{equation*}
    \Phi^{i,u}_{[k-N+1,k]/k}(\{z_{j}^{i}\}_{j=k-N+1}^{k})=\|\{z_{j}^{i}\}_{j=k-N+1}^{k}-\{\hat{x}_{j}^{i,u}\}_{j=k-N+1}^{k}\|^{2}_{H^{i}_{k}}+\Phi^{i,u}_{k}
\end{equation*}
\end{proposition}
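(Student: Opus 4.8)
The plan is to eliminate the disturbance and noise variables through the model constraints so that $\Phi^{i,\mathrm{MHE}}_k$ becomes a strictly convex quadratic in the stacked window $\mathbf{x}^i:=\mathrm{col}\{\hat{x}^i_{k-N},\dots,\hat{x}^i_k\}$, and then to read off the transit cost $\Phi^{i,u}_{[k-N+1,k]/k}$ as the value function of the partial minimization of that quadratic over the single remaining free coordinate $\hat{x}^i_{k-N}$. The Schur-complement formula for partial minimization of a positive-definite quadratic form then produces exactly the claimed identity, with $H^i_k$ the associated Schur complement.

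First I would use \eqref{eq:dmhe_1} to express each $\hat{w}^i_j$ and \eqref{eq:dmhe_2}--\eqref{eq:dmhe_3} to express $\hat{v}^{[i]}_{k-N},\hat{v}^{[i]}_{k-N+1},\dots,\hat{v}^{[i]}_k$, each of which is an affine function of $\mathbf{x}^i$ with coefficients built from $A$, $C$ and known quantities ($\tilde{x}^l_j$, $y_j$, $\bar{x}^i_{k-N}$). Substituting, $\Phi^{i,\mathrm{MHE}}_k=\sum_m\|L_m\mathbf{x}^i-a_m\|_{S_m}^2+\Phi^{i,*}_{k-1}$ for suitable $L_m,a_m$ and weights $S_m\in\{P_{i,k-N}^{-1},Q_i^{-1},R^{-1}\}$, all positive definite (positive definiteness of $P_{i,k-N}$ following inductively from $P_{i,0}\succ0$, $Q_i\succ0$ and the recursion \eqref{eq:upd:arrival cost_2}, which I would record as a short preliminary remark). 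Its weighting matrix $\mathbf{W}^i:=\sum_m L_m^{\mathrm{T}}S_m L_m$ is positive semidefinite, and in fact positive definite: if $\delta\mathbf{x}^i=\mathrm{col}\{\delta\hat{x}^i_{k-N},\dots,\delta\hat{x}^i_k\}$ satisfies $\delta\mathbf{x}^{i\mathrm{T}}\mathbf{W}^i\delta\mathbf{x}^i=0$ then $L_m\delta\mathbf{x}^i=0$ for every term, so the arrival-cost term forces $\delta\hat{x}^i_{k-N}=0$ and the term $\|\hat{w}^i_j\|^2_{Q_i^{-1}}$ forces $\delta\hat{x}^i_{j+1}=A_{ii}\delta\hat{x}^i_j$; induction on $j=k-N,\dots,k-1$ gives $\delta\mathbf{x}^i=0$. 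Hence $\Phi^{i,\mathrm{MHE}}_k$ has a unique unconstrained minimizer, which by definition of \eqref{dmhe_u} is $\mathbf{x}^{i,u}:=\mathrm{col}\{\hat{x}^{i,u}_{k-N},\dots,\hat{x}^{i,u}_k\}$, and minimum value $\Phi^{i,u}_k$, so $\Phi^{i,\mathrm{MHE}}_k=\|\mathbf{x}^i-\mathbf{x}^{i,u}\|^2_{\mathbf{W}^i}+\Phi^{i,u}_k$.

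Finally I would split $\mathbf{x}^i=\mathrm{col}\{\hat{x}^i_{k-N},\,\{\hat{x}^i_j\}_{j=k-N+1}^k\}$ and partition $\mathbf{W}^i=\bigl[\begin{smallmatrix}W^i_{11}&W^i_{12}\\ W^i_{21}&W^i_{22}\end{smallmatrix}\bigr]$ accordingly. In \eqref{eq:transit_u} the constraint $\hat{x}^i_j=z^i_j$ for $j=k-N+1,\dots,k$ leaves only $\hat{x}^i_{k-N}$ free, so minimizing the positive-definite quadratic $\|\mathbf{x}^i-\mathbf{x}^{i,u}\|^2_{\mathbf{W}^i}$ over $\hat{x}^i_{k-N}$ with the tail fixed at $\{z^i_j\}_{j=k-N+1}^k$ yields, by the Schur-complement identity for partial minimization of quadratic forms, $\Phi^{i,u}_{[k-N+1,k]/k}(\{z^i_j\}_{j=k-N+1}^k)=\|\{z^i_j\}_{j=k-N+1}^k-\{\hat{x}^{i,u}_j\}_{j=k-N+1}^k\|^2_{H^i_k}+\Phi^{i,u}_k$, where $H^i_k:=W^i_{22}-W^i_{21}(W^i_{11})^{-1}W^i_{12}$ is the Schur complement of $W^i_{11}$ in $\mathbf{W}^i$ and is positive definite since $\mathbf{W}^i\succ0$. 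This is the desired expression. As an alternative, the same matrix $H^i_k$ can be produced by a forward dynamic-programming sweep using Lemma \ref{lem:j1+j2}, mirroring the arrival-cost derivation of Section \ref{sec:4.1}.

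The main obstacle I anticipate is the positive-definiteness of $\mathbf{W}^i$, that is, the strict convexity of $\Phi^{i,\mathrm{MHE}}_k$ over the whole window; this is the only step where the problem data ($P_{i,k-N}\succ0$, $Q_i\succ0$, and the chain structure of the subsystem dynamics) must genuinely be used, while everything afterward is the routine Schur-complement computation. A related point requiring care is stating explicitly that $P_{i,k-N}$ remains positive definite under the recursion \eqref{eq:upd:arrival cost_2}.
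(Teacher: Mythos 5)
Your proposal is correct, and it reaches the same endpoint by a genuinely cleaner route than the paper. Both arguments begin identically: substitute the constraints \eqref{eq:dmhe_1}--\eqref{eq:dmhe_3} to turn $\Phi^{i,\mathrm{MHE}}_{k}$ into a convex quadratic and then partially minimize over $\hat{x}^{i}_{k-N}$. Where you differ is in execution. The paper eliminates $\hat{x}^{i}_{k-N}$ by three successive applications of Lemma \ref{lem:j1+j2} (combining the arrival cost with the two measurement terms at $k-N$ and $k-N+1$ and the first dynamics term), arrives at a residual quadratic in $\{z^{i}_{j}\}_{j=k-N+1}^{k}$ written in the factored form $\|C_{4,i}\{z_{j}^{i}\}-Y_{4,i}\|^{2}_{(\tilde{H}^{i}_{k})^{-1}}$, and invokes Lemma \ref{lem3} to recentre it, obtaining $H^{i}_{k}=C_{4,i}^{\mathrm{T}}(\tilde{H}^{i}_{k})^{-1}C_{4,i}$; the identification of the centre with $\{\hat{x}^{i,u}_{j}\}$ and of the constant with $\Phi^{i,u}_{k}$ is then made by optimality, exactly as in your last step. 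You instead treat the whole stacked window at once, prove strict convexity of the full Hessian $\mathbf{W}^{i}$ via the null-space/chain-structure argument, and obtain $H^{i}_{k}$ as the Schur complement $W^{i}_{22}-W^{i}_{21}(W^{i}_{11})^{-1}W^{i}_{12}$. The two matrices coincide (Lemma \ref{lem:j1+j2} is itself a Schur-complement identity, applied blockwise), but your version buys something the paper leaves implicit: the claimed positive definiteness of $H^{i}_{k}$. In the paper this rests on $C_{4,i}$ having full column rank (needed for Lemma \ref{lem3} and for $H^{i}_{k}\succ 0$), which is true because $\mathrm{col}\{C_{1,i},C_{2,i}\}$ is block lower triangular with identity diagonal, but this is never checked; your argument that $\delta\hat{x}^{i}_{k-N}=0$ and $\delta\hat{x}^{i}_{j+1}=A_{ii}\delta\hat{x}^{i}_{j}$ force $\delta\mathbf{x}^{i}=0$ is precisely that verification, together with the (also needed, also unstated in the paper) fact that $P_{i,k-N}\succ 0$ propagates through the recursion \eqref{eq:upd:arrival cost_2}. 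So your proof is complete and, if anything, slightly more careful than the published one; the only cost is that it does not produce the explicit formulas for $H^{i}_{k}$ and $Y_{i}$ that the paper reuses later (e.g.\ in Assumption \ref{assum:matrix}), though these can be recovered by identifying your $\mathbf{W}^{i}$ blocks with the paper's $C_{4,i}$, $\tilde{H}^{i}_{k}$ data.
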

\begin{proof}
Based on \eqref{eq:22e}, substituting the constraints \eqref{eq:dmhe_1}-\eqref{eq:dmhe_3} and the constraints $\hat{x}_{j}^{i}=z_{j}^{i}$ for $j=k-N+1, \ldots,k$, into $\Phi_{k}^{i,\mathrm{MHE}}$ yields:
    \begin{align}\label{eq:30}
        &\sum_{j=k-N+1}^{k-1}\|z_{j+1}^{i}-A_{ii}z_{j}^{i}-\sum_{l\in\mathrm{I}\setminus\{i\}}A_{il}\tilde{x}_{j}^{l}\|^{2}_{Q_{i}^{-1}}+\sum_{j=k-N+1}^{k-1}\|y_{j+1}-C(A_{i}^{*}z_{j}^{i}+\tilde{A}_{i}\tilde{x}_{j})\|^{2}_{R^{-1}}
        \nonumber\\
        +&\|z_{k-N+1}^{i}-A_{ii}\hat{x}_{k-N}^{i}-\sum_{l\in\mathrm{I}\setminus\{i\}}A_{il}\tilde{x}_{k-N}^{l}\|^{2}_{Q_{i}^{-1}}+\|y_{k-N+1}-C(A_{i}^{*}\hat{x}_{k-N}^{i}+\tilde{A}_{i}\tilde{x}_{k-N})\|^{2}_{R^{-1}}\nonumber\\
        +&\| y_{k-N}-C_{i}^{*}\hat{x}_{k-N}^{i}-\tilde{C}_{i}\tilde{x}_{k-N}\|^{2}_{R^{-1}}+\|\hat{x}_{k-N}^{i}-\bar{x}_{k-N}^{i}\|^{2}_{P_{i,k-N}^{-1}}+\Phi_{k-1}^{i,*}
    \end{align}
From Lemma \ref{lem:j1+j2}, it holds that
\begin{align}\label{eq:31}
    &\quad\| y_{k-N}-C_{i}^{*}\hat{x}_{k-N}^{i}-\tilde{C}_{i}\tilde{x}_{k-N}\|^{2}_{R^{-1}}+\|\hat{x}_{k-N}^{i}-\bar{x}_{k-N}^{i}\|^{2}_{P_{i,k-N}^{-1}}\nonumber\\
    &= \|\hat{x}_{k-N}^{i}-\breve{x}_{k-N}^{i}\|^{2}_{\breve{P}_{i,k-N}^{-1}}+\text{const}
\end{align}
where
    \begin{align*}
    \breve{P}_{i,k-N}&=P_{i,k-N}-P_{i,k-N}(C_{i}^{*})^{\mathrm{T}}(C_{i}^{*}P_{i,k-N}(C_{i}^{*})^{\mathrm{T}}+R)^{-1}C_{i}^{*}P_{i,k-N}\\
    \breve{x}_{k-N}^{i} &= \bar{x}_{k-N}^{i}+P_{i,k-N}(C_{i}^{*})^{\mathrm{T}}(C_{i}^{*}P_{i,k-N}(C_{i}^{*})^{\mathrm{T}}+R)^{-1}(y_{k-N}-C_{i}^{*}\bar{x}_{k-N}^{i}-\tilde{C}_{i}\tilde{x}_{k-N})
\end{align*}
$const$ represents a constant, which will be defined later. Similarly, it is further obtained:
\begin{align}\label{eq:32}
    &\quad\|\hat{x}_{k-N}^{i}-\breve{x}_{k-N}^{i}\|^{2}_{\breve{P}_{i,k-N}^{-1}}+ \|y_{k-N+1}-C(A_{i}^{*}\hat{x}_{k-N}^{i}+\tilde{A}_{i}\tilde{x}_{k-N})\|^{2}_{R^{-1}}\nonumber\\
    &= \|\hat{x}_{k-N}^{i}-\check{x}_{k-N}^{i}\|^{2}_{\check{P}_{i,k-N}^{-1}}+\text{const}
\end{align}
where
    \begin{align*}
    \check{P}_{i,k-N}&=\breve{P}_{i,k-N}-\breve{P}_{i,k-N}(A_{i}^{*})^{\mathrm{T}}C^{\mathrm{T}}(CA_{i}^{*}\breve{P}_{i,k-N}(A_{i}^{*})^{\mathrm{T}}C^{\mathrm{T}}+R)^{-1}CA_{i}^{*}\breve{P}_{i,k-N}\\
    \check{x}_{k-N}^{i} &= \breve{x}_{k-N}^{i}+\breve{P}_{i,k-N}(A_{i}^{*})^{\mathrm{T}}C^{\mathrm{T}}(CA_{i}^{*}\breve{P}_{i,k-N}(A_{i}^{*})^{\mathrm{T}}C^{\mathrm{T}}+R)^{-1}(y_{k-N+1}-C(A_{i}^{*}\breve{x}_{k-N}^{i}+\tilde{A}_{i}\tilde{x}_{k-N}))
\end{align*}
and
\begin{align}\label{eq:33}
    &\quad\|\hat{x}_{k-N}^{i}-\check{x}_{k-N}^{i}\|^{2}_{\check{P}_{i,k-N}^{-1}}+ \|z_{k-N+1}^{i}-A_{ii}\hat{x}_{k-N}^{i}-\sum_{l\in\mathrm{I}\setminus\{i\}}A_{il}\tilde{x}_{k-N}^{l}\|^{2}_{Q_{i}^{-1}}\nonumber\\
    &= \|\hat{x}_{k-N}^{i}-\sigma\|^{2}_{H^{-1}}+\|z_{k-N+1}^{i}-A_{ii}\check{x}_{k-N}^{i}-\sum_{l\in\mathrm{I}\setminus\{i\}}A_{il}\tilde{x}_{k-N}^{l}\|^{2}_{P_{i,k-N+1}^{-1}}
\end{align}
where
\begin{align*}
    H&=\check{P}_{i,k-N}A_{ii}^{\mathrm{T}}(A_{ii}\check{P}_{i,k-N}A_{ii}^{\mathrm{T}}+Q_{i})^{-1}\\
      \sigma & = \check{x}_{k-N}^{i}+\check{P}_{i,k-N}A_{ii}^{\mathrm{T}}(A_{ii}\check{P}_{i,k-N}A_{ii}^{\mathrm{T}}+Q_{i})^{-1}(z_{k-N+1}^{i}-A_{ii}\check{x}_{k-N}^{i}-\sum_{l\in\mathrm{I}\setminus\{i\}}A_{il}\tilde{x}_{k-N}^{l}) \\
     P_{i,k-N+1}  & =Q_{i}+A_{ii}\check{P}_{i,k-N}A_{ii}^{\mathrm{T}}
\end{align*}
Therefore, by substituting \eqref{eq:31}, \eqref{eq:32}, and \eqref{eq:33} into \eqref{eq:30}, we have
\begin{align}\label{eq:min_u_1}
    &\quad\Phi^{i,u}_{[k-N+1,k]/k}(\{z_{j}^{i}\}_{j=k-N+1}^{k})\nonumber\\
    &=\min_{\hat{x}^{i}_{k-N}} \Big\{\sum_{j=k-N+1}^{k-1}\|z_{j+1}^{i}-A_{ii}z_{j}^{i}-\sum_{l\in\mathrm{I}\setminus\{i\}}A_{il}\tilde{x}_{j}^{l}\|^{2}_{Q_{i}^{-1}}+\sum_{j=k-N+1}^{k-1}\|y_{j+1}-C(A_{i}^{*}z_{j}^{i}+\tilde{A}_{i}\tilde{x}_{j})\|^{2}_{R^{-1}}\nonumber\\
    &\quad+\|\hat{x}_{k-N}^{i}-\sigma\|^{2}_{H^{-1}}+\|z_{k-N+1}^{i}-A_{ii}\check{x}_{k-N}^{i}-\sum_{l\in\mathrm{I}\setminus\{i\}}A_{il}\tilde{x}_{k-N}^{l}\|^{2}_{P_{i,k-N+1}^{-1}}+\text{const}\Big\}
\end{align}
It is noted that the optimal solution of \eqref{eq:min_u_1} is  $\hat{x}_{k-N}^{i}= \sigma$. Consequently, we can obtain that
\begin{align}\label{eq:min_u_2}
&\quad\Phi^{i,u}_{[k-N+1,k]/k}(\{z_{j}^{i}\}_{j=k-N+1}^{k})\nonumber\\
    &=\sum_{j=k-N+1}^{k-1}\|z_{j+1}^{i}-A_{ii}z_{j}^{i}-\sum_{l\in\mathrm{I}\setminus\{i\}}A_{il}\tilde{x}_{j}^{l}\|^{2}_{Q_{i}^{-1}}+\sum_{j=k-N+1}^{k-1}\|y_{j+1}-C(A_{i}^{*}z_{j}^{i}+\tilde{A}_{i}\tilde{x}_{j})\|^{2}_{R^{-1}}\nonumber\\
    &\quad+\|z_{k-N+1}^{i}-A_{ii}\check{x}_{k-N}^{i}-\sum_{l\in\mathrm{I}\setminus\{i\}}A_{il}\tilde{x}_{k-N}^{l}\|^{2}_{P_{i,k-N+1}^{-1}}+\text{const}
\end{align}


Define $Y_{1,i}=A_{ii}\check{x}_{k-N}^{i}+\sum_{l\in\mathrm{I}\setminus\{i\}}A_{il}\tilde{x}_{k-N}^{l}$,
\begin{align*}
    Y_{2,i}&=\left[\begin{array}{ccc}
     \sum_{l\in\mathrm{I}\setminus\{i\}}A_{il}\tilde{x}_{k-N+1}^{l}\\
      \vdots\\
      \sum_{l\in\mathrm{I}\setminus\{i\}}A_{il}\tilde{x}_{k-1}^{l}
    \end{array}\right],
    ~ Y_{3,i}=\left[\begin{array}{ccc}
     y_{k-N+2}-C\tilde{A}_{i}\tilde{x}_{k-N+1}\\
      \vdots\\
     y_{k}-C\tilde{A}_{i}\tilde{x}_{k-1}
    \end{array}\right],
    C_{1,i}=[I_{n_{x^{i}}}, \underbrace{0,\ldots,0}_{N-1}],\\
    C_{2,i}&=\left[\begin{array}{ccccccccc}
        -A_{ii} & I_{n_{x^{i}}} &0 &\ldots&0&0\\
         0&-A_{ii} & I_{n_{x^{i}}} &\ldots&0&0\\
         \vdots&\vdots &\vdots &\ddots&\vdots&\vdots\\
        0&0 & 0 &\ldots&-A_{ii} & I_{n_{x^{i}}}\\
    \end{array}\right],
    C_{3,i} = \left[\begin{array}{ccccccc}
       CA_{i}^{*} & 0  &\ldots&0&0\\
         0&CA_{i}^{*}  &\ldots&0&0\\
         \vdots&\vdots  &\ddots&\vdots&\vdots\\
        0&0  &\ldots&CA_{i}^{*}&0 \\
    \end{array}\right]
\end{align*}
Consequently, \eqref{eq:min_u_2} can be rewritten as
\begin{align}\label{eq:38}
    \Phi^{i,u}_{[k-N+1,k]/k}(\{z_{j}^{i}\}_{j=k-N+1}^{k})&=\|C_{1,i}\{z_{j}^{i}\}_{j=k-N+1}^{k}-Y_{1,i}\|^{2}_{P_{i,k-N+1}^{-1}}+\|C_{2,i}\{z_{j}^{i}\}_{j=k-N+1}^{k}-Y_{2,i}\|^{2}_{\mathbf{Q}_{i}^{-1}}\nonumber\\
    &\quad+\|C_{3,i}\{z_{j}^{i}\}_{j=k-N+1}^{k}-Y_{3,i}\|^{2}_{\mathbf{R}^{-1}}+\text{const}\nonumber\\
    &=\|C_{4,i}\{z_{j}^{i}\}_{j=k-N+1}^{k}-Y_{4,i}\|^{2}_{(\tilde{H}_{k}^{i})^{-1}}+\text{const}
\end{align}
where $\mathbf{Q}_{i}=\mathrm{diag}\{\underbrace{Q_{i},\ldots,Q_{i}}_{N-1}\}$; $\mathbf{R}=\mathrm{diag}\{\underbrace{R,\ldots,R}_{N-1}\}$;
$Y_{4,i}=\mathrm{col}\{Y_{1,i},Y_{2,i},Y_{3,i}\}$; $C_{4,i}=\mathrm{col}\{C_{1,i},C_{2,i},C_{3,i}\}$; $\tilde{H}_{k}^{i}=\mathrm{diag}\{P_{i,k-N+1},\mathbf{Q}_{i},\mathbf{R}\}$.
According to Lemma \ref{lem3}, \eqref{eq:38} is equivalent to the equation
\begin{equation}\label{eq:min_u_Y}
    \Phi^{i,u}_{[k-N+1,k]/k}(\{z_{j}^{i}\}_{j=k-N+1}^{k})=\|\{z_{j}^{i}\}_{j=k-N+1}^{k}-Y_{i}\|^{2}_{H_{k}^{i}}+\text{const}
\end{equation}
where
\begin{subequations}
    \begin{align}
    H_{k}^{i}&=C_{4,i}^{\mathrm{T}}(\tilde{H}_{k}^{i})^{-1}C_{4,i}\label{eq:H}\\
    Y_{i} &=(C_{4,i}^{\mathrm{T}}(\tilde{H}_{k}^{i})^{-1}C_{4,i})^{-1}C_{4,i}^{\mathrm{T}}(\tilde{H}_{k}^{i})^{-1}Y_{4,i}
\end{align}
\end{subequations}
By optimality, $z_{j}^{i}=\hat{x}_{j}^{i,u}$, where $j=k-N+1,\ldots,k$, are the global minimizers of transit cost $\Phi^{i,u}_{[k-N+1,k]/k}(\{z_{j}^{i}\}_{j=k-N+1}^{k})$ in \eqref{eq:transit_u}, and the corresponding global minimum of the transit cost in \eqref{eq:transit_u} is $\Phi_{k}^{i,u}$. Therefore, we have that $\{\hat{x}_{j}^{i,u}\}_{j=k-N+1}^{k}=Y_{i}$, and the constant term in \eqref{eq:min_u_Y} is $\Phi_{k}^{i,u}$. $\square$
\end{proof}

\begin{lemma}\label{Lemma:14}(\cite{farina2024moving})
    Let
     $\{\hat{x}_{j}^{i}\}_{j=k-N}^{k}$ be the solution of the proposed DMHE problem in \eqref{dmhe_revised} for the subsystem $i$.
Then it holds
\begin{equation*}
   \Phi^{i,\mathrm{MHE}}_{[k-N+1,k]/k}(\{z_{j}^{i}\}_{j=k-N+1}^{k})\geq\|\{z_{j}^{i}\}_{j=k-N+1}^{k}-\{\hat{x}_{j}^{i}\}_{j=k-N+1}^{k}\|^{2}_{H_{k}^{i}}+\Phi^{i,*}_{k}
\end{equation*}
\end{lemma}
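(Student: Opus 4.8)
The plan is to reduce the claim to a standard variational (three-point) estimate for the minimizer of a strictly convex quadratic over a convex set, using Proposition \ref{lem:4} to pin down the form of the transit cost. The first observation is that fixing $\hat{x}_{j}^{i}=z_{j}^{i}$ for $j=k-N+1,\ldots,k$ in \eqref{dmhe_revised} forces $\hat{x}_{k}^{i}=z_{k}^{i}$, so the only remaining decision variable in the definition of $\Phi^{i,\mathrm{MHE}}_{[k-N+1,k]/k}$ is $\hat{x}_{k-N}^{i}$, and the state constraint \eqref{eq:dmhe_4} collapses to the parameter condition $z_{k}^{i}\in\mathbb{X}^{i}$. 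Hence $\Phi^{i,\mathrm{MHE}}_{[k-N+1,k]/k}(\{z_{j}^{i}\}_{j=k-N+1}^{k})=\Phi^{i,u}_{[k-N+1,k]/k}(\{z_{j}^{i}\}_{j=k-N+1}^{k})$ whenever $z_{k}^{i}\in\mathbb{X}^{i}$, and $=+\infty$ otherwise; in the latter case the asserted inequality is trivial, so it suffices to treat $z_{k}^{i}\in\mathbb{X}^{i}$.

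By Proposition \ref{lem:4}, on this feasible region
\[
\Phi^{i,\mathrm{MHE}}_{[k-N+1,k]/k}(\{z_{j}^{i}\}_{j=k-N+1}^{k})=\|\{z_{j}^{i}\}_{j=k-N+1}^{k}-\{\hat{x}_{j}^{i,u}\}_{j=k-N+1}^{k}\|^{2}_{H_{k}^{i}}+\Phi^{i,u}_{k},
\]
a strictly convex quadratic in $\{z_{j}^{i}\}_{j=k-N+1}^{k}$ since $H_{k}^{i}\succ0$. Next I would identify the constrained minimizer: since $\Phi^{i,*}_{k}=\min_{\{\hat{x}_{j}^{i}\}_{j=k-N}^{k}}\Phi^{i,\mathrm{MHE}}_{k}$ under \eqref{eq:dmhe_1}--\eqref{eq:dmhe_4}, and minimizing first over $\hat{x}_{k-N}^{i}$ and then over the tail reproduces the transit cost, the tail $\{\hat{x}_{j}^{i}\}_{j=k-N+1}^{k}$ of the constrained DMHE solution is exactly the minimizer of the above quadratic over the convex set $\mathcal{Z}^{i}=\{\{z_{j}^{i}\}_{j=k-N+1}^{k}:z_{k}^{i}\in\mathbb{X}^{i}\}$, with minimum value $\Phi^{i,*}_{k}$. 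In particular $\Phi^{i,*}_{k}=\|\{\hat{x}_{j}^{i}\}_{j=k-N+1}^{k}-\{\hat{x}_{j}^{i,u}\}_{j=k-N+1}^{k}\|^{2}_{H_{k}^{i}}+\Phi^{i,u}_{k}$.

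The argument then closes with the three-point inequality. Writing $u=\{\hat{x}_{j}^{i,u}\}_{j=k-N+1}^{k}$, $c=\{\hat{x}_{j}^{i}\}_{j=k-N+1}^{k}$ and taking $z=\{z_{j}^{i}\}_{j=k-N+1}^{k}\in\mathcal{Z}^{i}$ arbitrary: because $\mathbb{X}^{i}$, hence $\mathcal{Z}^{i}$, is convex, the segment $c+t(z-c)$, $t\in[0,1]$, stays in $\mathcal{Z}^{i}$, and $t\mapsto\|c+t(z-c)-u\|^{2}_{H_{k}^{i}}$ is minimized over $[0,1]$ at $t=0$; its one-sided derivative at $t=0$ is nonnegative, so $(z-c)^{\mathrm{T}}H_{k}^{i}(c-u)\ge0$. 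Expanding $\|z-u\|^{2}_{H_{k}^{i}}=\|z-c\|^{2}_{H_{k}^{i}}+2(z-c)^{\mathrm{T}}H_{k}^{i}(c-u)+\|c-u\|^{2}_{H_{k}^{i}}$ yields $\|z-u\|^{2}_{H_{k}^{i}}\ge\|z-c\|^{2}_{H_{k}^{i}}+\|c-u\|^{2}_{H_{k}^{i}}$, and substituting the two displayed identities above gives precisely $\Phi^{i,\mathrm{MHE}}_{[k-N+1,k]/k}(\{z_{j}^{i}\}_{j=k-N+1}^{k})\ge\|\{z_{j}^{i}\}_{j=k-N+1}^{k}-\{\hat{x}_{j}^{i}\}_{j=k-N+1}^{k}\|^{2}_{H_{k}^{i}}+\Phi^{i,*}_{k}$.

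The only genuinely load-bearing ingredients are the positive definiteness of $H_{k}^{i}$ from Proposition \ref{lem:4} and the convexity of the constraint set $\mathbb{X}^{i}$. I expect the main obstacle to be the first step: carefully verifying that the constraint $\hat{x}_{k}^{i}\in\mathbb{X}^{i}$ really does reduce to a condition on the parameter $z_{k}^{i}$, so that the constrained transit cost coincides with the unconstrained one on the feasible region and equals $+\infty$ off it; once this is established, everything that follows is routine convex-analytic manipulation built on Proposition \ref{lem:4}.
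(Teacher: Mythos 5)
Your proof is correct. Note first that the paper does not actually prove this lemma: it is stated with a citation to \cite{farina2024moving} and used as an imported result, so there is no in-paper argument to compare against. Your reconstruction is the natural one and all the steps check out: (i) fixing the tail $\hat{x}_{j}^{i}=z_{j}^{i}$ for $j=k-N+1,\ldots,k$ leaves only $\hat{x}_{k-N}^{i}$ free, and since the sole state constraint \eqref{eq:dmhe_4} acts on $\hat{x}_{k}^{i}$, it becomes the parameter condition $z_{k}^{i}\in\mathbb{X}^{i}$, so the constrained and unconstrained transit costs agree on the feasible set; (ii) Proposition \ref{lem:4} then gives the exact quadratic form $\|z-u\|^{2}_{H_{k}^{i}}+\Phi^{i,u}_{k}$; (iii) nested minimization identifies the tail $c=\{\hat{x}_{j}^{i}\}_{j=k-N+1}^{k}$ of the constrained DMHE solution as the minimizer of this quadratic over $\mathcal{Z}^{i}$ with value $\Phi^{i,*}_{k}$, so $\Phi^{i,*}_{k}=\|c-u\|^{2}_{H_{k}^{i}}+\Phi^{i,u}_{k}$; and (iv) the first-order optimality condition $(z-c)^{\mathrm{T}}H_{k}^{i}(c-u)\ge 0$ on the convex set $\mathcal{Z}^{i}$ kills the cross term in the expansion of $\|z-u\|^{2}_{H_{k}^{i}}$, which is exactly the claimed inequality (and explains why it is an inequality rather than the equality of Proposition \ref{lem:4}). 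The one genuine caveat is the one you already flag: the argument needs $\mathbb{X}^{i}$ convex (and $H_{k}^{i}\succ 0$, i.e., $C_{4,i}$ of full column rank, which holds by the triangular structure of $\mathrm{col}\{C_{1,i},C_{2,i}\}$). The paper never explicitly assumes convexity of $\mathbb{X}^{i}$ — it only calls the constraint sets compact in the nonlinear section — but the lemma is false without it, so this is a hypothesis inherited silently from \cite{farina2024moving} rather than a gap in your argument.
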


Before proceeding further, we introduce a matrix and an assumption that will be used to prove the stability of the proposed DMHE method in \eqref{dmhe_revised}.
\begin{align}\label{eq:W}
    W_{k}^{i}&=\mathrm{diag}\{P_{l,k-N}^{-1}+C^{\mathrm{T}}R^{-1}C+n\sum_{i\in\mathbb{I}}(A_{r,il}^{\mathrm{T}}Q_{i}^{-1}A_{r,il}+\tilde{A}_{i,[:,i]}^{\mathrm{T}}C^{\mathrm{T}}R^{-1}C\tilde{A}_{i,[:,i]}),\ldots,\nonumber\\
    &\quad n\sum_{i\in\mathbb{I}}(A_{r,il}^{\mathrm{T}}Q_{i}^{-1}A_{r,il}+\tilde{A}_{i,[:,i]}^{\mathrm{T}}C^{\mathrm{T}}R^{-1}C\tilde{A}_{i,[:,i]})\}
\end{align}

\begin{assumption}\label{assum:matrix}
The matrices $H_{k}^{i}$ and $W_{k}^{i}$ satisfy
\begin{equation*}
    W_{k}^{i}\leq H_{k}^{i},~ \forall i\in\mathbb{I}
\end{equation*}
where $H_{k}^{i}$ and $W_{k}^{i}$ are defined in \eqref{eq:H} and \eqref{eq:W}, respectively.
\end{assumption}

\begin{proposition}\label{lem:6}
For the proposed DMHE approach in \eqref{dmhe_revised}, if Assumption \ref{assum:matrix} holds, then
    \begin{equation}\label{eq:lem6}
        \sum_{j=k-N}^{k-1}\|\hat{w}_{j}\|^{2}_{Q_{i}^{-1}}+\sum_{j=k-N}^{k}\|\hat{V}_{j}\|^{2}_{\mathbf{R}^{-1}}\mathop{\longrightarrow}\limits^{k\rightarrow\infty}0
    \end{equation}
\end{proposition}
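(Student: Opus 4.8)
The plan is to prove that the sequence of optimal values $\{\Phi_k^{*}\}$ of the collective problem \eqref{eq:collective_25} converges, and then to read off \eqref{eq:lem6} from the form of its increments. By the construction in \eqref{eq:22e}, $\Phi_k^{\mathrm{MHE}}$ equals the previously attained optimum $\Phi^{*}_{k-1}$ plus the nonnegative terms $\|\hat{x}_{k-N}-\bar{x}_{k-N}\|^{2}_{P_{k-N}^{-1}}+\sum_{j=k-N}^{k-1}\|\hat{w}_{j}\|^{2}_{Q^{-1}}+\sum_{j=k-N}^{k}\|\hat{V}_{j}\|^{2}_{\mathbf{R}^{-1}}$; hence $\Phi_k^{*}\ge\Phi_{k-1}^{*}$, so $\{\Phi_k^{*}\}$ is monotonically nondecreasing, and, since minimizing a constant plus a nonnegative functional returns that functional evaluated at the minimizer,
\begin{equation*}
\Phi_k^{*}-\Phi_{k-1}^{*}=\|\hat{x}_{k-N}-\bar{x}_{k-N}\|^{2}_{P_{k-N}^{-1}}+\sum_{j=k-N}^{k-1}\|\hat{w}_{j}\|^{2}_{Q^{-1}}+\sum_{j=k-N}^{k}\|\hat{V}_{j}\|^{2}_{\mathbf{R}^{-1}}\ge 0 ,
\end{equation*}
where the right-hand side is evaluated at the optimal solution of \eqref{eq:collective_25}. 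Thus, once $\{\Phi_k^{*}\}$ is shown to be bounded above, the monotone convergence theorem gives $\Phi_k^{*}-\Phi_{k-1}^{*}\to 0$, and because the right-hand side is a sum of nonnegative quantities, each of them — in particular $\sum_{j=k-N}^{k-1}\|\hat{w}_{j}\|^{2}_{Q^{-1}}+\sum_{j=k-N}^{k}\|\hat{V}_{j}\|^{2}_{\mathbf{R}^{-1}}$ — tends to zero, which is exactly \eqref{eq:lem6}.

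The crux is therefore the boundedness of $\{\Phi_k^{*}\}$, which I would establish by a warm-start (candidate-solution) argument carried out subsystem by subsystem. At instant $k$ and for each $i\in\mathbb{I}$, take as a candidate the solution computed at instant $k-1$ restricted to the overlap window $\{k-N,\dots,k-1\}$ and extended to time $k$ through \eqref{eq:dmhe_1} with zero process disturbance at time $k-1$ (its terminal state being feasible for \eqref{eq:dmhe_4}); by optimality, $\Phi_k^{i,*}$ does not exceed $\Phi_k^{i,\mathrm{MHE}}$ evaluated at this candidate. To bound that value I would decompose it just as in the proof of Proposition \ref{lem:4} — folding the two measurement updates and the dynamics step at the head of the window into the three-step arrival-cost recursion \eqref{eq:upd:arrival cost_2} — so that the candidate's cost in excess of $\Phi_{k-1}^{i,*}$ splits into the innovation terms already present at instant $k-1$ and a collection of correction terms that appear only because the neighbour estimates $\tilde{x}_j^{l}$ used at instant $k$ differ from those used at instant $k-1$. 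These corrections are quadratic in the increments of the neighbours' estimates, and I would bound their total over $i$ by $\sum_{i\in\mathbb{I}}\|\{\hat{x}_j^{i}\}-\{\hat{x}_j^{i,\mathrm{old}}\}\|^{2}_{W_k^{i}}$ with $W_k^{i}$ as in \eqref{eq:W}; the factor $n$ and the inner sum over $\mathbb{I}$ in \eqref{eq:W} are precisely what is produced when the cross terms are distributed across the $n$ subsystems via Young's inequality.

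To close the loop I would invoke Lemma \ref{Lemma:14}: the transit cost at instant $k-1$ is bounded below by $\|\{z_j^{i}\}-\{\hat{x}_j^{i}\}\|^{2}_{H_{k-1}^{i}}+\Phi_{k-1}^{i,*}$, so evaluating it at the instant-$k$ solution restricted to the overlap and summing over $i$ shows that $\Phi_{k-1}^{*}-\Phi_{k-2}^{*}$ dominates $\sum_{i\in\mathbb{I}}\|\{\hat{x}_j^{i}\}-\{\hat{x}_j^{i,\mathrm{old}}\}\|^{2}_{H_{k-1}^{i}}$; combining this with the candidate bound of the previous paragraph and Assumption \ref{assum:matrix} ($W_k^{i}\le H_k^{i}$) yields a bound on $\Phi_k^{*}-\Phi_{k-1}^{*}$ which, summed over $k$, telescopes up to a residual controlled by the coupling terms and hence leaves $\{\Phi_k^{*}\}$ bounded; convergence and then \eqref{eq:lem6} follow as above. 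The main obstacle, as I see it, is exactly the bookkeeping in the second paragraph: propagating the candidate's cost through the three-step arrival-cost recursion of \eqref{eq:upd:arrival cost_2} and verifying that every correction term is controlled by $W_k^{i}$ with no slack worse than the factor $n$ already built into \eqref{eq:W}, so that Assumption \ref{assum:matrix} is indeed strong enough to absorb the coupling mismatch between consecutive sampling instants.
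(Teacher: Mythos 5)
Your opening step --- monotonicity of $\{\Phi_k^{*}\}$ from the additive constant $\Phi_{k-1}^{*}$ in \eqref{eq:22e}, followed by ``increasing $+$ bounded $\Rightarrow$ convergent $\Rightarrow$ increments vanish'' --- is exactly the paper's skeleton, and your observation that the increment equals the sum of the nonnegative stage terms at the optimum is correct. The gap is in the boundedness argument, which is the substance of the proof, and your version does not close. First, the candidate: the paper does not warm-start from the previous window's solution. It evaluates the transit cost at the \emph{true noise-free state trajectory}, choosing $\hat{x}_{k-N}=x_{k-N}$ and $\hat{w}_{j}=A_{r}(x_{j}-\tilde{x}_{j})$ so that $\hat{x}_{j}=x_{j}$ throughout the window; every residual then reduces to a linear function of $x_{j}-\tilde{x}_{j}=x_{j}-\hat{x}_{j|k-1}$, i.e.\ of the previous instant's estimation error, as in \eqref{eq:49}. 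This is precisely the quantity Lemma \ref{Lemma:14} controls when its lower bound at instant $k-1$ is evaluated at $z_{j}^{i}=x_{j}^{i}$, namely $\|\{x_{j}^{i}\}-\{\tilde{x}_{j}^{i}\}\|^{2}_{H_{k}^{i}}+\Phi^{i,*}_{k-1}\le\Phi^{i,\mathrm{MHE}}_{[k-N,k-1]/k-1}(\{x_{j}^{i}\})$, and Assumption \ref{assum:matrix} converts the $W_{k}^{i}$-weighted bound into the $H_{k}^{i}$-weighted one, giving \eqref{eq:iter_2}. Your warm-start candidate instead produces a residual against the \emph{new} measurement, $y_{k}-\mathbf{C}(A^{*}\hat{x}_{k-1|k-1}+\tilde{A}\tilde{x}_{k-1})$, and an arrival-cost term $\|\hat{x}_{k-N|k-1}-\bar{x}_{k-N}\|^{2}_{P_{k-N}^{-1}}$, neither of which is a quadratic form in the increments of the neighbours' estimates, so they are not absorbed by $W_{k}^{i}$; and the terminal feasibility $\hat{x}_{k}^{i}\in\mathbb{X}^{i}$ of the zero-disturbance extension is asserted rather than guaranteed (whereas the true state is feasible by standing assumption).

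Second, even granting the bookkeeping, your closing inequality has the wrong shape. Bounding $\Phi_{k}^{*}-\Phi_{k-1}^{*}$ by a quantity dominated by $\Phi_{k-1}^{*}-\Phi_{k-2}^{*}$, with the constant $1$ that $W_{k}^{i}\le H_{k}^{i}$ affords, only shows that the increments never exceed the first increment; summed, this permits $\Phi_{k}^{*}$ to grow linearly in $k$, so boundedness does not follow. Moreover, Lemma \ref{Lemma:14} evaluated at the instant-$k$ solution bounds $\sum_{i}\|\{\hat{x}^{i}_{j|k}\}-\{\hat{x}^{i}_{j|k-1}\}\|^{2}_{H^{i}_{k-1}}+\Phi^{*}_{k-1}$ from above by the transit cost \emph{at instant $k-1$ of the instant-$k$ solution}, which is not $\Phi^{*}_{k-1}+(\Phi^{*}_{k-1}-\Phi^{*}_{k-2})$ nor anything you otherwise control, so the claimed domination of the solution increments by $\Phi^{*}_{k-1}-\Phi^{*}_{k-2}$ does not follow from the lemma. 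The paper sidesteps both issues by exhibiting a single non-increasing majorant: with $T_{k}:=\Phi^{\mathrm{MHE}}_{[k-N+1,k]/k}(\{x_{j}\}_{j=k-N+1}^{k})$ one has $\Phi_{k}^{*}\le T_{k}\le T_{k-1}\le\cdots\le\|x_{0}-\bar{x}_{0}\|^{2}_{P_{0}^{-1}}$ as in \eqref{eq:bounded}, and boundedness is immediate. To salvage your route you would need either a strict contraction ($W_{k}^{i}\le\lambda H_{k}^{i}$ with $\lambda<1$) or to switch to the paper's true-trajectory candidate; as written, the proposal does not establish the proposition.
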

\begin{proof}
    Based on \eqref{eq:collective_24} and \eqref{eq:collective_25}, we can obtain the following
    \begin{equation}\label{eq:increasing}
        \Phi^{*}_{k} -\Phi^{*}_{k-1}\geq\sum_{j=k-N}^{k-1}\|\hat{w}_{j}\|^{2}_{Q_{i}^{-1}}+\sum_{j=k-N}^{k}\|\hat{V}_{j}\|^{2}_{\mathbf{R}^{-1}}\geq0
    \end{equation}
    Therefore, the sequence $\Phi^{*}_{k}$ is increasing.
According to optimality $\Phi^{*}_{k}$, it follows that
    \begin{align}\label{eq:iter_1}
        \Phi^{*}_{k}&\leq\Phi_{[k-N+1,k]/k}^{\mathrm{MHE}}(\{x_{j}\}_{j=k-N+1}^{k})
    \end{align}
    where $x_{j}$, $j=k-N+1,\ldots,k$, is the actual state generated by \eqref{cmodel} without process disturbances and measurement noise.
From \eqref{collective_estimation}, by choosing $\hat{x}_{k-N}=x_{k-N}$ and  $\hat{w}_{j}=A_{r}(x_{j}-\tilde{x}_{j})$ for $j=k-N,\ldots,k-1$, the trajectory of $\hat{x}_{j}=x_{j}$, for $j=k-N+1,\ldots,k$, can be generated. Then, we have
    \begin{equation}\normalsize
\hat{V}_{j} =\left\{ \begin{array}{l}
\mathbf{C}\tilde{A}(x_{j-1}-\tilde{x}_{j-1}),~~~~~{\text{for}}~ j=k-N+1,\ldots,k~~\\[0.3em]
C^{*}(x_{j}-\tilde{x}_{j}),~~~~~~~~~~~{\text{for}}~ j=k-N~~
\end{array} \right.
\end{equation}
Therefore, by optimality, it holds that
    \begin{align}\label{eq:49}
         \Phi_{[k-N+1,k]/k}^{\mathrm{MHE}}(\{x_{j}\}_{j=k-N+1}^{k})
        &\leq\sum_{j=k-N}^{k-1}\|A_{r}(x_{j}-\tilde{x}_{j})\|^{2}_{Q^{-1}}+\sum_{j=k-N}^{k-1}\|\mathbf{C}\tilde{A}(x_{j}-\tilde{x}_{j})\|^{2}_{\mathbf{R}^{-1}}\\
        &\quad+\|C^{*}(x_{k-N}-\tilde{x}_{k-N})\|^{2}_{\mathbf{R}^{-1}}+\|x_{k-N}-\bar{x}_{k-N}\|^{2}_{P_{k-N}^{-1}}+\Phi^{*}_{k-1}\nonumber
    \end{align}

Next, our objective is to prove that $\Phi_{[k-N+1,k]/k}^{\mathrm{MHE}}(\{x_{j}\}_{j=k-N+1}^{k})\leq\Phi_{[k-N,k-1]/k-1}^{\mathrm{MHE}}(\{x_{j}\}_{j=k-N}^{k-1})$. To achieve this, we analyze each term on the right-hand-side of \eqref{eq:49}.
Specifically, the first term on the right-hand-side of \eqref{eq:49} satisfies
\begin{align}\label{eq:44_1}
    \sum_{j=k-N}^{k-1}\|A_{r}(x_{j}-\tilde{x}_{j})\|^{2}_{Q^{-1}}&=\sum_{i\in\mathbb{I}}\sum_{j=k-N}^{k-1}\|\sum_{l\in\mathbb{I}}A_{r,il}(x_{j}^{l}-\tilde{x}_{j}^{l})\|^{2}_{Q_{i}^{-1}}\nonumber\\
    &\leq \sum_{i\in\mathbb{I}}\sum_{j=k-N}^{k-1}\sum_{l\in\mathbb{I}}n\|A_{r,il}(x_{j}^{l}-\tilde{x}_{j}^{l})\|^{2}_{Q_{i}^{-1}}\nonumber\\
    &= \sum_{l\in\mathbb{I}}\sum_{j=k-N}^{k-1}\|x_{j}^{l}-\tilde{x}_{j}^{l}\|^{2}_{n\sum_{i\in\mathbb{I}}A_{r,il}^{\mathrm{T}}Q_{i}^{-1}A_{r,il}}
\end{align}
where $A_{r,il}$ represents the block matrix in the $i$th row and the $l$th column of the matrix $A_{r}$.
Similarly, by analyzing the remaining terms on the right-hand-side of \eqref{eq:49}, one can obtain
\begin{subequations}\label{eq:44_2}
    \begin{align}
    \sum_{j=k-N}^{k-1}\|\mathbf{C}\tilde{A}(x_{j}-\tilde{x}_{j})\|^{2}_{\mathbf{R}^{-1}}
    &\leq \sum_{l\in\mathbb{I}}\sum_{j=k-N}^{k-1}\|x_{j}^{l}-\tilde{x}_{j}^{l}\|^{2}_{n\sum_{i\in\mathbb{I}}\tilde{A}_{i,[:,i]}^{\mathrm{T}}C^{\mathrm{T}}R^{-1}C\tilde{A}_{i,[:,i]}}\\
    \|\mathbf{C}(x_{k-N}-\tilde{x}_{k-N})\|^{2}_{\mathbf{R}^{-1}}
    &\leq \sum_{l\in\mathbb{I}}\|x_{k-N}^{l}-\tilde{x}_{k-N}^{l}\|^{2}_{C^{\mathrm{T}}R^{-1}C}\\
    \|x_{k-N}-\bar{x}_{k-N}\|^{2}_{P_{k-N}^{-1}}
    &\leq \sum_{l\in\mathbb{I}}\|x_{k-N}^{l}-\tilde{x}_{k-N}^{l}\|^{2}_{P_{l,k-N}^{-1}}
\end{align}
\end{subequations}
where $\tilde{A}_{i,[:,i]}$ is composed of the columns of $\tilde{A}_{i}$ with respect to subsystem state $x^{i}$. Then, substituting \eqref{eq:44_1} and \eqref{eq:44_2} into \eqref{eq:49} yields
\begin{align*}
    \Phi_{[k-N+1,k]/k}^{\mathrm{MHE}}(\{x_{j}\}_{j=k-N+1}^{k})&\leq \sum_{l\in\mathbb{I}}\|x_{k-N}^{l}-\tilde{x}_{k-N}^{l}\|^{2}_{P_{l,k-N}^{-1}+C^{\mathrm{T}}R^{-1}C}\nonumber\\
    &\quad+\sum_{l\in\mathbb{I}}\sum_{j=k-N}^{k-1}\|x_{j}^{l}-\tilde{x}_{j}^{l}\|^{2}_{n\sum_{i\in\mathbb{I}}(A_{r,il}^{\mathrm{T}}Q_{i}^{-1}A_{r,il}+\tilde{A}_{i,[:,i]}^{\mathrm{T}}C^{\mathrm{T}}R^{-1}C\tilde{A}_{i,[:,i]})}+\Phi^{*}_{k-1}\nonumber\\
    &\leq\sum_{l\in\mathbb{I}}\big(\|\{x_{j}^{l}\}_{j=k-N}^{k-1}-\{\tilde{x}_{j}^{l}\}_{j=k-N}^{k-1}\|^{2}_{W_{k}^{i}}+\Phi^{i,*}_{k-1}\big)
\end{align*}
where $W_{k}^{i}$ is defined in \eqref{eq:W}.
Considering Assumption \ref{assum:matrix} and Lemma \ref{Lemma:14}, one can obtain
\begin{align}\label{eq:iter_2}
\Phi_{[k-N+1,k]/k}^{\mathrm{MHE}}(\{x_{j}\}_{j=k-N+1}^{k})
&\leq\sum_{l\in\mathbb{I}}\Big(\|\{x_{j}^{l}\}_{j=k-N}^{k-1}-\{\hat{x}_{j}^{l}\}_{j=k-N}^{k-1}\|^{2}_{H_{k}^{i}}+\Phi^{i,*}_{k-1}\Big)\nonumber\\
&\leq\sum_{l\in\mathbb{I}}\Phi^{i,\mathrm{MHE}}_{[k-N,k-1]/k-1}(\{z_{j}^{i}\}_{j=k-N}^{k-1})\nonumber\\
&=\Phi_{[k-N,k-1]/k-1}^{\mathrm{MHE}}(\{z_{j}\}_{j=k-N}^{k-1})
\end{align}
From \eqref{eq:iter_1} and \eqref{eq:iter_2}, we can iterate this procedure and obtain
that
\begin{equation}\label{eq:bounded}
    \Phi^{*}_{k}\leq\Phi_{[k-N+1,k]/k}^{\mathrm{MHE}}(\{x_{j}\}_{j=k-N+1}^{k})\leq\Phi_{[k-N,k-1]/k-1}^{\mathrm{MHE}}(\{z_{j}\}_{j=k-N}^{k-1})\leq\ldots\leq\|x_{0}-\bar{x}_{0}\|^{2}_{P_{0}^{-1}}
\end{equation}
Considering \eqref{eq:increasing} and \eqref{eq:bounded}, the sequence of $\Phi^{*}_{k}$ converges as it is increasing and bounded. Consequently, \eqref{eq:lem6} is proven. $\square$
\end{proof}

We further define the estimation error at sampling instant $j$ calculated at sampling instant $k$ as $e_{j|k}=x_{j}-\hat{x}_{j|k}=x_{j}-\hat{x}_{j}$. Then, the estimation error of sampling instant $j$ calculated at the previous sampling instant $k-1$ is denoted by $e_{j|k-1}=x_{j}-\hat{x}_{j|k-1}=x_{j}-\tilde{x}_{j}$.
\begin{theorem}\label{theorem}
    If Assumption \ref{assum:matrix} holds, then there exists a sequence $\alpha_{k}$, $\alpha_{k}\mathop{\longrightarrow}\limits^{k\rightarrow\infty}0$, such that the sequence of estimation error within the estimation window $E_{k}=\mathrm{col}\{e_{k-N+1|k},\ldots,e_{k|k}\}$ for the entire system in \eqref{cmodel} generated by the proposed DMHE in \eqref{dmhe_revised} is described by
\begin{align*}
    E_{k}
        &=(M_{2}-M_{1}(O^{\mathrm{T}}O)^{-1}O^{\mathrm{T}}\Gamma)E_{k-1}+\alpha_{k}
\end{align*}
Additionally, the estimation error $E_{k}$ converges, if the spectral radius of matrix $M_{2}-M_{1}(O^{\mathrm{T}}O)^{-1}O^{\mathrm{T}}\Gamma$ satisfies
\begin{equation*}
    \rho(M_{2}-M_{1}(O^{\mathrm{T}}O)^{-1}O^{\mathrm{T}}\Gamma)<1
\end{equation*}
where
\begin{align}\label{eq:notation}
    O&=\left[\begin{array}{ccccccccc}
        \mathbf{C}A^{*}  \\
         \vdots\\
         \mathbf{C}A^{*}A_{d}^{N-1}
    \end{array}\right],~
    \Gamma=\left[\begin{array}{ccccccccc}
        \mathbf{C}\tilde{A}&0&\cdots&0  \\
         \mathbf{C}A^{*}A_{r}&\mathbf{C}\tilde{A}&\cdots&0\\
         \vdots& \vdots& \vdots& \vdots\\
         \mathbf{C}A^{*}A_{d}^{N-2}A_{r}&\mathbf{C}A^{*}A_{d}^{N-1}A_{r}&\cdots&\mathbf{C}\tilde{A}\\
    \end{array}\right]\nonumber\\
    M_{1}&=\left[\begin{array}{ccccccccc}
        A_{d}  \\
         \vdots\\
         A_{d}^{N}
    \end{array}\right],~
    M_{2}=\left[\begin{array}{ccccccccc}
        A_{r}&0&\cdots&0  \\
        A_{d}A_{r}&A_{r}&\cdots&0\\
         \vdots& \vdots& \vdots& \vdots\\
         A_{d}^{N-1}A_{r}&A_{d}^{N-2}A_{r}&\cdots&A_{r}\\
    \end{array}\right]
\end{align}
\end{theorem}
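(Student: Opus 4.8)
The plan is to derive an explicit recursion for the in-window estimation error $E_k$ by exploiting the characterization of the unconstrained DMHE solution from Proposition~\ref{lem:4} together with the asymptotic vanishing of the disturbance and noise penalties established in Proposition~\ref{lem:6}. First I would argue that, thanks to Proposition~\ref{lem:6}, the estimates $\hat{w}_j$ and $\hat{V}_j$ produced by \eqref{dmhe_revised} tend to zero, so that asymptotically the state constraint \eqref{eq:dmhe_4} becomes inactive and the solution of the constrained problem \eqref{dmhe_revised} coincides (up to a term that goes to zero, which will be absorbed into $\alpha_k$) with the solution of the unconstrained problem \eqref{dmhe_u}. Hence it suffices to propagate the unconstrained estimate. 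Writing the collective model \eqref{collective_estimation} and unrolling \eqref{eq:23a} over the window with $\hat{w}_j \to 0$, the in-window estimates satisfy $\hat{x}_{k-N+m} = A_d^{m}\hat{x}_{k-N} + \sum_{\ell=0}^{m-1} A_d^{m-1-\ell} A_r \tilde{x}_{k-N+\ell} + (\text{vanishing})$, which, after subtracting from the true trajectory $x_{k-N+m} = A_d^{m} x_{k-N} + \sum A_d^{m-1-\ell} A_r x_{k-N+\ell}$ (true dynamics in the collective form), yields $E_k = M_1 e_{k-N|k} + M_2 E_{k-1} + \alpha_k'$, using the definitions of $M_1$ and $M_2$ in \eqref{eq:notation} and noting that the previous-instant estimates $\tilde{x}_j$ furnish $E_{k-1}$.

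The remaining task is to express $e_{k-N|k}$ in terms of $E_{k-1}$. Here I would use the output equations \eqref{eq:23c}: stacking the measurement residuals over the window and using that $\hat{V}_j \to 0$, the estimate $\hat{x}_{k-N}$ satisfies, asymptotically, a least-squares normal equation $O^{\mathrm{T}} O\, \hat{x}_{k-N} = O^{\mathrm{T}}\big(\text{stacked outputs} - \Gamma \tilde{x}_{\text{window}}\big)$, with $O$ and $\Gamma$ as in \eqref{eq:notation}; subtracting the corresponding identity for the true state $x_{k-N}$ gives $e_{k-N|k} = -(O^{\mathrm{T}}O)^{-1} O^{\mathrm{T}} \Gamma E_{k-1} + (\text{vanishing})$. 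Substituting this into the expression from the previous paragraph produces $E_k = \big(M_2 - M_1 (O^{\mathrm{T}}O)^{-1} O^{\mathrm{T}} \Gamma\big) E_{k-1} + \alpha_k$, where $\alpha_k$ collects all the vanishing contributions; that $\alpha_k \to 0$ follows from Proposition~\ref{lem:6}. The convergence claim is then immediate: iterating the linear recursion, $\|E_k\| \le \rho^{k}\|E_0\| + \sum \rho^{k-1-j}\|\alpha_j\|$ for any $\rho$ slightly above the spectral radius, and since $\rho(\cdot) < 1$ and $\alpha_j \to 0$, a standard argument (split the sum, use summability of a geometric tail) gives $E_k \to 0$.

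The main obstacle, I expect, is making rigorous the passage from the constrained problem \eqref{dmhe_revised} to the asymptotic unconstrained normal equations — in particular, quantifying the "vanishing" terms uniformly enough that they can legitimately be packaged into a single sequence $\alpha_k$ with $\alpha_k \to 0$, rather than merely arguing heuristically that constraints become inactive. This requires invertibility of $O^{\mathrm{T}}O$ (a collective observability-type condition implicitly needed for the statement to make sense) and a careful bookkeeping of how the sub-optimal feasible perturbations used in Proposition~\ref{lem:6} control $\hat{w}_j$ and $\hat{V}_j$, hence control the deviation of $\hat{x}_{k-N}$ and of the in-window estimates from their unconstrained least-squares values. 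A secondary technical point is that the estimates $\tilde{x}_j$ fed from neighbors at instant $k-1$ must be identified precisely with the components of $E_{k-1}$, so that the recursion is genuinely closed in $E$; this is a matter of tracking the definition \eqref{eq:tilde} of $\tilde{x}_j^{l}$ and the notation $e_{j|k-1} = x_j - \tilde{x}_j$ introduced just before the theorem.
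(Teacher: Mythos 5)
Your overall skeleton matches the paper's: unroll the collective dynamics over the window to get $E_{k}=M_{1}e_{k-N|k}+M_{2}E_{k-1}+(\text{vanishing})$, use the vanishing of the output residuals to obtain $Oe_{k-N|k}+\Gamma E_{k-1}\rightarrow 0$, solve for $e_{k-N|k}$ with the left inverse $(O^{\mathrm{T}}O)^{-1}O^{\mathrm{T}}$, substitute, and finish with the standard geometric argument for $\rho(\cdot)<1$. The convergence step and the bookkeeping of $\tilde{x}_{j}$ as $e_{j|k-1}$ are exactly as in the paper.

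The one step that would fail as you have written it is the passage to the unconstrained problem. You argue that because $\hat{w}_{j},\hat{V}_{j}\rightarrow 0$ the constraint $\hat{x}_{k}^{i}\in\mathbb{X}^{i}$ becomes inactive, so that $\hat{x}_{k-N}$ asymptotically satisfies the normal equation $O^{\mathrm{T}}O\,\hat{x}_{k-N}=O^{\mathrm{T}}(\cdot)$. Neither claim holds: vanishing of the stage costs does not imply inactivity of the state constraint, and even for the unconstrained problem the first-order conditions are the \emph{weighted} normal equations (weights $P_{i,k-N}^{-1}$, $Q_{i}^{-1}$, $R^{-1}$, plus the arrival-cost term), not $O^{\mathrm{T}}O$. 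This detour is also unnecessary. The paper works directly with the constrained minimizer: since \eqref{eq:23a}--\eqref{eq:23c} are equality constraints, the optimal residuals satisfy the identity $\hat{V}_{j+1}=\mathbf{C}A^{*}e_{j|k}+\mathbf{C}\tilde{A}e_{j|k-1}$ (in the noise-free setting), and expanding $e_{j|k}$ via the unrolled error recursion \eqref{eq:53} gives
\begin{equation*}
\|Oe_{k-N|k}+\Gamma E_{k-1}\|\;\leq\;\sum_{j}\|\hat{V}_{j+1}\|+\|\mathbf{C}\|\,\|A^{*}\|\sum_{j,l}\|A_{d}^{l-1}\|\,\|\hat{w}_{j-l}\|,
\end{equation*}
so Proposition \ref{lem:6} alone forces $Oe_{k-N|k}+\Gamma E_{k-1}=\alpha_{k}^{1}\rightarrow 0$; the matrix $(O^{\mathrm{T}}O)^{-1}O^{\mathrm{T}}$ then enters merely as a convenient left inverse of $O$, not as a least-squares gain of the estimator. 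If you replace your normal-equation argument by this residual-identity argument, your proof coincides with the paper's; as it stands, the justification of $e_{k-N|k}\approx-(O^{\mathrm{T}}O)^{-1}O^{\mathrm{T}}\Gamma E_{k-1}$ is the missing piece.
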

\begin{proof}
In the noise-free setting (i.e., $w_{k}=0$ and $v_{k}=0$, $\forall k$), the actual state satisfies
\begin{subequations}\label{collective_actual}
    \begin{align}
    x_{j+1}&=A_{d}x_{j}+A_{r}x_{j}\label{eq:50a}\\
    Y_{j+1}&=C(A^{*}x_{j}+\tilde{A}x_{j})\label{eq:50b}
\end{align}
\end{subequations}
Considering \eqref{eq:23a} and \eqref{eq:50a}, one can obtain
\begin{subequations}\label{eq:x_xhat}
    \begin{align}
  x_{j}&=A_{d}^{j-(k-N)}x_{k-N}+\sum_{l=1}^{j-(k-N)}A_{d}^{l-1}A_{r}x_{j-l}\\
  \hat{x}_{j}&=A_{d}^{j-(k-N)}\hat{x}_{k-N}+\sum_{l=1}^{j-(k-N)}A_{d}^{l-1}A_{r}\tilde{x}_{j-l}+\sum_{l=1}^{j-(k-N)}A_{d}^{l-1}\hat{w}_{j-l}
\end{align}
\end{subequations}
From \eqref{eq:x_xhat}, it it is further derived that
\begin{align}\label{eq:53}
    e_{j|k}=A_{d}^{j-(k-N)}e_{k-N|k}+\sum_{l=1}^{j-(k-N)}A_{d}^{l-1}A_{r}e_{j-l|k-1}-\sum_{l=1}^{j-(k-N)}A_{d}^{l-1}\hat{w}_{j-l}
\end{align}
By taking into account \eqref{eq:23c}, \eqref{eq:50b}, and \eqref{eq:53}, we have
\begin{align}\label{eq:54}
    \sum_{j=k-N}^{k-1}\|\hat{V}_{j+1}\|&=\sum_{j=k-N}^{k-1}\|\mathbf{C}A^{*}e_{j|k}+\mathbf{C}\tilde{A}e_{j|k-1}\|\nonumber\\
    &\geq\sum_{j=k-N}^{k-1}\|\mathbf{C}A^{*}A_{d}^{j-(k-N)}e_{k-N|k}+\mathbf{C}A^{*}\sum_{l=1}^{j-(k-N)}A_{d}^{l-1}A_{r}e_{j-l|k-1}+\mathbf{C}\tilde{A}e_{j|k-1}\|\nonumber\\
    &\quad-\|\mathbf{C}\|\|A^{*}\|\sum_{j=k-N}^{k-1}\sum_{l=1}^{j-(k-N)}\|A_{d}^{l-1}\|\|\hat{w}_{j-l}\|
\end{align}
Therefore, \eqref{eq:54} is equivalent to
\begin{align*}
&\quad\sum_{j=k-N}^{k-1}\|\mathbf{C}A^{*}A_{d}^{j-(k-N)}e_{k-N|k}+\mathbf{C}A^{*}\sum_{l=1}^{j-(k-N)}A_{d}^{l-1}A_{r}e_{j-l|k-1}+\mathbf{C}\tilde{A}e_{j|k-1}\|\nonumber\\
&\leq\sum_{j=k-N}^{k-1}\|\hat{V}_{j+1}\|+\|\mathbf{C}\|\|A^{*}\|\sum_{j=k-N}^{k-1}\sum_{l=1}^{j-(k-N)}\|A_{d}^{l-1}\|\|\hat{w}_{j-l}\|
\end{align*}
Based on Proposition \ref{lem:6}, it is obtained that
\begin{align*}
    &\quad\sum_{j=k-N}^{k-1}\|\mathbf{C}A^{*}A_{d}^{j-(k-N)}e_{k-N|k}+\mathbf{C}A^{*}\sum_{l=1}^{j-(k-N)}A_{d}^{l-1}A_{r}e_{j-l|k-1}+\mathbf{C}\tilde{A}e_{j|k-1}\|\nonumber\\
    &=\|Oe_{k-N|k}+\Gamma E_{k-1}\|\mathop{\longrightarrow}\limits^{k\rightarrow\infty}0
\end{align*}
where $O$ and $\Gamma$ are defined in \eqref{eq:notation}.
Let $\alpha_{k}^{1}$ and $\alpha_{k}^{2}$ denote asymptotically vanishing variables, i.e., $\|\alpha_{k}^{j}\|\mathop{\longrightarrow}\limits^{k\rightarrow\infty}0$, $j=1,2$.
Then, it holds
\begin{equation}\label{eq:alpha1}
   Oe_{k-N|k}+\Gamma E_{k-1}=\alpha_{k}^{1}
\end{equation}
By concatenating \eqref{eq:53} for $j=k-N+1,\ldots,k$, we can obtain
\begin{equation}\label{eq:alpha2}
  E_{k}=M_{1}e_{k-N|k}+M_{2}E_{k-1}+\alpha_{k}^{2}
\end{equation}
where $M_1$ and $M_2$ are defined in \eqref{eq:notation}.
From \eqref{eq:alpha1} and \eqref{eq:alpha2}, it holds
\begin{align*}
        E_{k}&=(M_{2}-M_{1}(O^{\mathrm{T}}O)^{-1}O^{\mathrm{T}}\Gamma)E_{k-1}+\alpha_{k}
\end{align*}
where $\alpha_{k}=\alpha_{k}^{2}-M_{1}(O^{\mathrm{T}}O)^{-1}O^{\mathrm{T}}\alpha_{k}^{1}$, which satisfies $\alpha_{k}\mathop{\longrightarrow}\limits^{k\rightarrow\infty}0$. Additionally, the estimation error $E_{k}$ converges to zero when $\rho(M_{2}-M_{1}(O^{\mathrm{T}}O)^{-1}O^{\mathrm{T}}\Gamma)<1$. $\square$
\end{proof}

\section{Distributed moving horizon estimation for nonlinear systems}
In this section, the arrival cost design obtained for linear systems is extended to approximate the arrival costs of the local estimators in the nonlinear constrained context.

\subsection{Arrival cost approximation for nonlinear systems} \label{sec:arrival cost}
We extend the arrival cost design for linear unconstrainted systems, as depicted in \eqref{eq:upd:arrival cost_1}, to approximate the arrival costs for nonlinear systems. %
The nonlinear subsystem model in \eqref{model_nonlinear_centralized} is used as the model basis for each local estimator.
Through successively linearizing the subsystem model in \eqref{model_nonlinear_centralized} at each sampling instant $k$, we obtain an approximation of the arrival cost for the $i$th estimator.
The linearization is performed as follows:
 \begin{equation}\label{eq:linearization}
    C_{k} =\frac{\partial h(x_{k})}{\partial \hat{x}_{k}},\quad\quad\quad A_{k} = \frac{\partial f(x_{k}, X_{k})}{\partial \hat{x}_{k}}.
  \end{equation}
Then, the corresponding matrices $A_{il,k}$, and $A_{[:,i],k}$, $\forall i,l\in\mathbb{I}$, can be derived from $A_{k}$ in \eqref{eq:linearization}. These matrices are utilized to update the arrival cost $V_{k}^{i,o}$ of the proposed constrained nonlinear DMHE.
Specifically, for each subsystem $i$, $i\in\mathbb{I}$, the expression of the proposed arrival cost design for nonlinear systems is obtained from \eqref{eq:upd:arrival cost_1} and \eqref{eq:upd:arrival cost_2} by replacing $A_{il}$ with $A_{il,k}$, $A_{[:,i]}$ with $A_{[:,i],k}$, $\forall i,l\in\mathbb{I}$, and $C$ with $C_{k}$.

\subsection{Formulation of nonlinear MHE-based estimators}
Based on the arrival cost approximation for nonlinear constrained systems outlined in Section \ref{sec:arrival cost}, at each sampling instant $k$, the local estimator of the DMHE algorithm for the nonlinear system in \eqref{model_nonlinear_centralized} is as follows:
\normalsize
\begin{subequations}\label{eq:nmhe}
  \begin{align}
&\min_{\{\hat{x}^{i}_{j}\}_{j=k-N}^{k}}\sum_{j=k-N}^{k-1}\|\hat{w}^{i}_{j}\|^2_{Q_{i}^{-1}}+\sum_{j=k-N}^{k}\|\hat{v}_{j}^{[i]}\|^2_{R^{-1}}
+\|\hat{x}^{i}_{k-N}-\bar{x}^{i}_{k-N}\|_{P_{i,k-N}^{-1}}^{2}\\
 &\quad~\,\,~\text{s.t.}~ \hat{x}_{j+1}^{i}=f_{i}(\hat{x}_{j}^{i}, \tilde{X}_{j}^{i})+\hat{w}_{j}^{i}\\
&\quad\,\,~\quad\quad\quad\, y_{0}= h(\hat{x}_{0}^{[i]})+\hat{v}_{0}^{[i]}\\
&\quad\,\,~\quad\quad\quad\, y_{j+1}= h(f(\hat{x}_{j}^{[i]}))+\hat{v}_{j}^{[i]},~ j=0,\ldots,k-1\\
&\quad\,\,~\quad\quad\quad\, \hat{x}_{j}^{i}\in\mathbb{X}^{i}, ~\hat{w}_{j}^{i}\in\mathbb{W}^{i}
 \end{align}
\end{subequations}
\normalsize
{\color{black} In (\ref{eq:nmhe}), $\hat{x}_{j}^{[i]}=\mathrm{col}\{\tilde{x}_{j}^{1},\ldots,\hat{x}_{j}^{i}, \ldots,\tilde{x}_{j}^{n}$\}, where $\hat{x}^{i}_{j}$ represents the state estimate of subsystem $i$ and serves as the decision variable of the optimization problem in \eqref{eq:nmhe}, and $\tilde{x}_{j}^{l}$ is determined based on the estimate of each interconnected subsystem $l$, $l\in\mathbb{I}\setminus\{i\}$, generated at the previous sampling instant $k-1$; $\tilde{X}_{j}^{i}$ concatenates all the $\tilde{x}_{j}^{l}$ of the interconnected subsystems $l$, $l\in\mathbb{I}\setminus\{i\}$;}
$\mathbb{X}^{i}$ and $\mathbb{W}^{i}$ are two compact sets that contain $\hat{x}_{j}^{i}$ and $\hat{w}_{j}^{i}$, respectively.
When solving the MHE-based optimization problem for the $i$th subsystem, only the state estimates associated with subsystem $i$, i.e., $\hat{x}^{i}_j$, $j=k-N,\ldots,k$, are treated as decision variables. The state estimates $\tilde{x}^{l}_{j}$ of the interconnected subsystems $l$, $l\in\mathbb{I}\setminus\{i\}$, are considered as known inputs to the $i$th estimator.

Algorithm \ref{alg1} outlines the implementation steps for the proposed DMHE approach for the nonlinear system in \eqref{model_nonlinear_centralized} with a recursive update of arrival costs of the local estimators, which can be followed to generate the optimal state estimates $\hat{x}_{k}^{i}$ for the $i$th subsystem, $i\in\mathbb{I}$.

\begin{algorithm}[t]
\caption{Key steps for the implementation of the proposed DMHE method}\vspace{3mm}
\label{alg1}
At each sampling time $k\geq N+1$, the MHE-based estimator for the $i$th subsystem, $i\in\mathbb{I}$, carry out the following steps:
\begin{enumerate}
    \item[1.] Receive measured outputs $\{y\}_{k-N}^{k}$, and optimal estimates $\{\tilde{x}^{l}_{j}\}_{j=k-N-1}^{k-1}$ of $l$th subsystem, $l\in\mathbb{I}\setminus\{i\}$, obtained at the previous sampling instant $k-1$ from each estimator $l$, $l\in\mathbb{I}\setminus\{i\}$.
     \item[2.] Compute the open-loop state prediction $\bar{x}_{k-N}^{i}$ following \eqref{eq:20_1}, \eqref{eq:20_2}, \eqref{eq:20_3}, and \eqref{eq:linearization}.
     \item[3.] Compute the weighting matrix $P_{i,k}$ via \eqref{eq:20_4}, \eqref{eq:20_5}, \eqref{eq:20_6}, and \eqref{eq:linearization}.
    \item[4.] Solve \eqref{eq:nmhe} to generate optimal state estimates (i.e., $\{\hat{x}_{j}^{i}\}_{j=k-N}^{k}$).
     \item[5.] Set $k=k+1$. Go to step 1.

     \end{enumerate}
\end{algorithm}

\section{Application to a reactor-separator process}
\begin{figure}[tttt]
 \centering
 \includegraphics[width=0.8\textwidth]{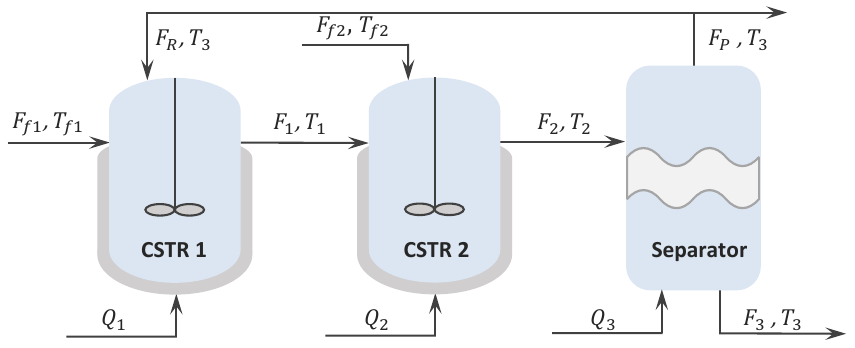}
 \caption{A schematic of the reactor-separator process.}\label{CSTR}
\end{figure}
\subsection{Process description}\label{sim:description}
In this section, we consider a reactor-separator chemical process that consists of two continuous stirred tank reactors (CSTRs) and one flash tank separator. Based on the physical topology presented in Figure \ref{CSTR}, we partition this process into three subsystems, with each subsystem accounting for one vessel.

This chemical process involves two reactions: the first reaction converts reactant $A$ into desired product $B$; the second reaction converts $B$ into side product $C$.
The system states include the mass fractions of reactant $A$ (denoted by $x_{Ai}$, $i=1,2,3$), the mass fractions of product $B$ (denoted by $x_{Bi}$,  $i=1,2,3$), and the temperatures in three vessels (denoted by $T_{i}$, $i=1,2,3$). Among these states, only temperatures $T_{i}$ in the three vessels can be measured online. The details of the first-principles nonlinear dynamic model and a more comprehensive description of this chemical process can be found in \cite{zhang2013distributed}. The objective is to implement the proposed DMHE approach, where the arrival costs for the local estimators are updated using a recursive method for estimating the nine system states based on the measured outputs $T_{i}$, $i=1,2,3$.

\begin{table}[tttt]
  \centering
    \caption{The initial state $x_{0}$ and the initial guess $\bar{x}_{0}$ for the chemical process.}\label{tbl:initial_states}
  \begin{tabular}{cccccccccc}
   \toprule
   & $x_{A1}$ & $x_{B1}$ & $T_{1}$ $(\mathrm{K})$ & $x_{A2}$ &$x_{B2}$& $T_{2}$ $(\mathrm{K})$ & $x_{A3}$ & $x_{B3}$ & $T_{3}$ $(\mathrm{K})$ \\
   \midrule
   $x_{0}$ & 0.1939 & 0.7404 & 528.3482 & 0.2162  & 0.7190& 520.0649 & 0.0716 & 0.7373 & 522.3765\\
    $\bar{x}_{0}$ & 0.2521 & 0.9625 & 686.8525 & 0.2810 & 0.9346& 676.0844 & 0.0931 & 0.9585 & 679.0894\\
   \bottomrule
\end{tabular}
\end{table}

In the simulations, the heat exchange rates considered in this process are $Q_1 = (2.9+1.74$ $\sin(0.06\pi t))\times 10^6~\mathrm{kJ/h}$, $Q_2=(1+0.6\sin(0.06\pi t))$ $\times 10^6~\mathrm{kJ/h}$, and $Q_3=(2.9+1.74\sin(0.06\pi t))\times 10^6~\mathrm{kJ/h}$. The initial state $x_{0}$ that is utilized to generate the actual state trajectories of this process is presented in Table \ref{tbl:initial_states}.
All the states are scaled to ensure equal importance is assigned to states of different magnitudes.
Unknown process disturbances and measurement noise are generated following a zero-mean Gaussian distribution with a standard deviation of $0.01$ for process disturbances and $0.05$ for measurement noise, which are further added to the states and output measurements in the scaled coordinate, respectively.

\subsection{Simulation results}
\begin{figure}[tttt]
  \centering
  \includegraphics[width=0.9\textwidth]{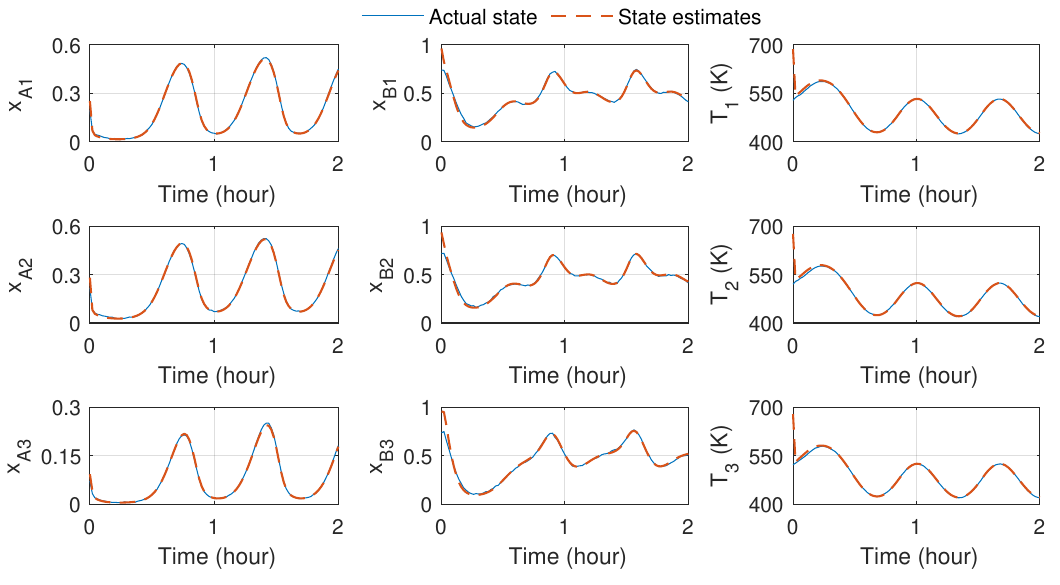}
  \caption{Trajectories of the actual system state and the state estimates provided by the proposed DMHE algorithm for three vessels.}\label{estimation}
\end{figure}

\begin{table}[tttt]
  \centering
    \caption{Comparison of RMSEs.}\label{error}
  \begin{tabular}{cccccccccc}
   \toprule
   & Proposed DMHE & DMHE-1 & DMHE-2 \\
   \midrule
   RMSE & 0.1384 & 0.1425 & 0.2595 \\
   \bottomrule
\end{tabular}
\end{table}
The estimation window is $N=4$. {\color{black}The initial guess for the proposed DMHE is picked as $\bar{x}_{0}=1.3\times x_{0}$, as presented in Table \ref{tbl:initial_states}.} The initial weighting matrices $P_{i,0}$, $Q_{i}$, and $R_{i}$, $i\in\mathbb{I}$, are chosen as $P_{i,0}=0.001\times I_{3}$, $Q_{i}=0.01\times I_{3}$, and $R=0.05\times I_{3}$.  We impose constraints
on the estimates of $x_{Ai}$ and $x_{Bi}$, $i=1,2,3$, generated by the proposed DMHE such that they stay within the range of $[0,1]$, while the estimates of temperatures $T_{i}$, $i=1,2,3$, are made positive.

\begin{figure}[tttt]
  \centering
  \includegraphics[width=0.9\textwidth]{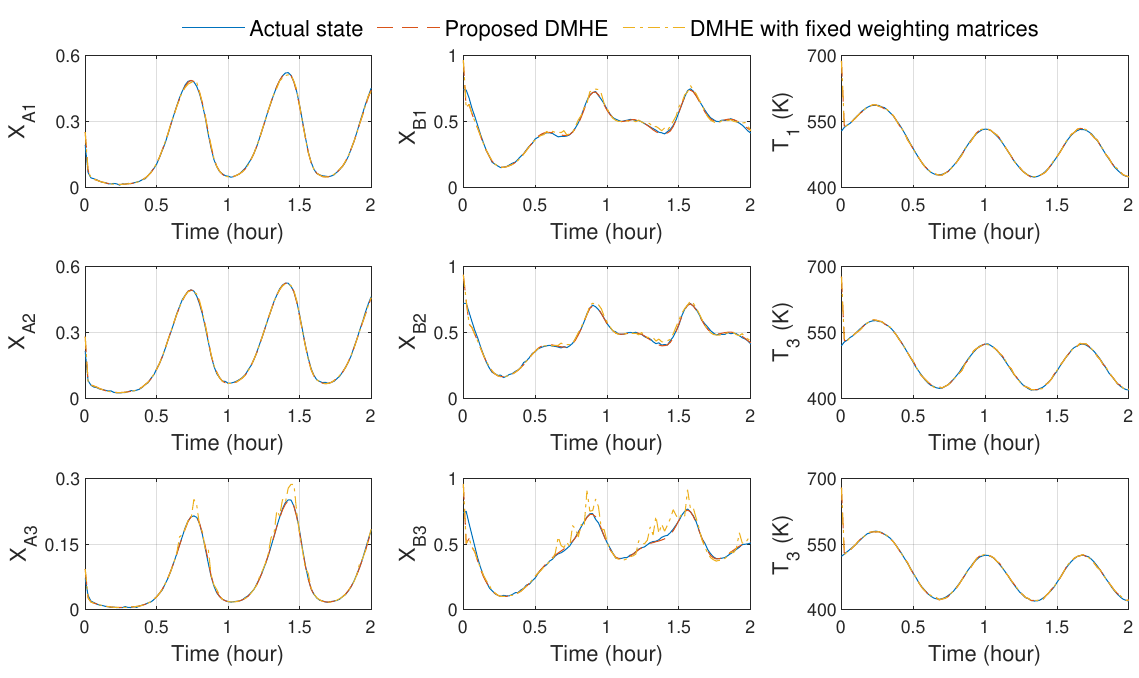}
  \caption{Trajectories of the actual system state and the state estimates provided by the proposed DMHE algorithm and the DMHE-3 approach for three vessels.}\label{compare}
\end{figure}

The trajectories of the state estimates given by the proposed DMHE algorithm and the actual states are presented in Figure \ref{estimation}.
{\color{black}The proposed DMHE approach provides accurate estimates of the ground truth of all the process states, which demonstrates the robustness of the proposed DMHE approach against unknown disturbances.}
Additionally, we evaluate and compare the estimation performance of the proposed DMHE approach with two DMHE algorithms of which each local estimator only uses the sensor measurements of the corresponding subsystems: 1) DMHE-1, where the arrival cost is constructed as a weighted squared error between the state estimate and the \emph{a priori} state prediction, with the weighting matrix determined by a constant matrix; 2) DMHE-2, where the arrival cost is not considered in DMHE algorithm.
The constant weighting matrices $P_{i}$, $Q_{i}$, and $R$, $i=1,2,3$, of DMHE-1, are chosen the same as the initial weighting matrices of the proposed DMHE method. The root mean squared errors (RMSE) for the three DMHE algorithms in the scaled coordinate are shown in Table \ref{error}. The proposed DMHE algorithm provides more accurate estimates than the other two methods.

\begin{figure}[tttt]
  \centering
  \includegraphics[width=0.8\textwidth]{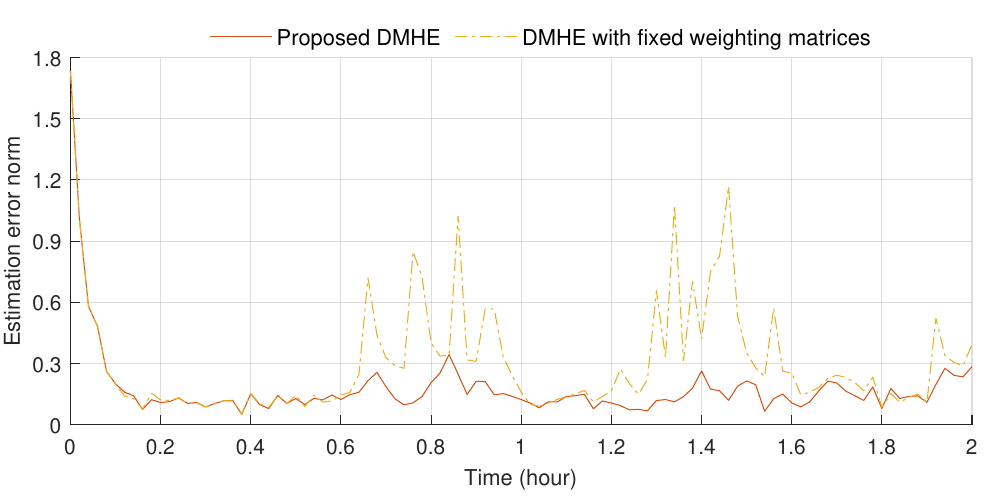}
  \caption{Trajectories of the estimation error norm of the proposed proposed DMHE algorithm and the DMHE-3 approach.}\label{compare_error}
\end{figure}

{\color{black}To further illustrate the superiority of employing a recursive approach to approximate and update the arrival costs at each sampling instant,
we compare it with DMHE-3, where the arrival costs of local estimators are designed
as a weighted squared error between the estimate of state and the initial guess of state, weighted by a constant matrix throughout the simulation period. Different from DMHE-1 in the previous comparison, DMHE-3 incorporates sensor measurements from the interconnected subsystems into the objective function of the local estimator design in the same manner as our proposed DMHE approach.
In this comparison, we randomly select the weighting matrices of DMHE-3 without fine-tuning for this process. Specifically, the constant weighting matrices $P_{i}$, $Q_{i}$, and $R$ for the $i$th estimator of DMHE-3 are diagonal matrices with the main diagonal elements set to 1, 0.001, and 0.001, respectively.
These matrices for DMHE-3 also serve as the initial weighting matrices for our proposed DMHE algorithm. In contrast, the matrix $P_{i}$ will be updated following \eqref{eq:upd:arrival cost_2} and \eqref{eq:linearization} at every sampling instant when conducting state estimation using our proposed DMHE approach.

The trajectories of actual states and the estimates provided by both the proposed DMHE algorithm and DMHE-3 are shown in Figure \ref{compare}, and the corresponding trajectories of the estimation errors are presented in Figure \ref{compare_error}. The results demonstrate that the proposed DMHE approach outperforms DMHE-3 overall in terms of estimation accuracy.
It is worth mentioning that when constant weighting matrices are employed for DMHE, extensive trial and error analysis is typically needed to fine-tune the weighting matrices for good estimation performance. In contrast, the arrival cost of each local estimator is updated at each sampling instant in our proposed DMHE algorithm, which allows for less accurate initial parameters and is more favorable for implementation.
}
{\color{black}
\begin{remark}
Compared with the iterative DMHE approaches in \cite{schneider2015convergence,schneider2015iterative,li2023iterative} that require iterative executions within each sampling period, our proposed DMHE method offers more efficient computation. This improvement is attributed to the update of the arrival cost at each sampling instant. Our approach employs a recursive method to provide a more accurate approximation of the arrival cost, which not only improves the accuracy of the state estimates but also reduces the computation complexity because the local estimators are only required to be executed once within each sampling period.
\end{remark}}

{\color{black}
\begin{remark}
One of the important tuning parameters that affect the trade-off between the estimation accuracy and the computational complexity is the length of the estimation window for the local estimators, denoted by $N$. From an application perspective, a larger $N$ has the potential to enhance the estimation accuracy of the DMHE approach. Meanwhile, increasing $N$ also leads to increased complexity of the online optimization problems associated with the local estimators, leading to a higher computational burden.
Therefore, a good trade-off between the estimation accuracy and the computational complexity should be achieved via appropriately adjusting the window length N.
\end{remark}}

\section{Concluding Remarks}
We addressed a partition-based distributed state estimation problem for general nonlinear systems. A recursive approach was introduced to approximate the arrival cost for each MHE-based estimator of the DMHE scheme. A partition-based distributed full-information estimation formulation was employed to derive an analytical expression for the arrival costs of local estimators of the DMHE algorithm in the linear unconstrained context. Based on the derived arrival cost for each local estimator, the proposed DMHE estimator for constrained linear systems was proposed, and the stability of the proposed DMHE scheme for linear systems was proven. Subsequently,
through successive linearization of nonlinear subsystem models, the arrival cost design for linear unconstrained systems was extended to the nonlinear context. Accordingly, we proposed a partition-based DMHE algorithm for constrained nonlinear processes.
The proposed DMHE method was applied to a simulated chemical process, and the results confirmed its superiority and efficacy.

{\color{black}In the future research, we will investigate the stability of the DMHE approach for nonlinear systems and the development of a robust DMHE algorithm for systems with uncertain model parameters.}

\section*{Acknowledgment}
This research is supported by the Ministry of Education, Singapore, under its Academic Research Fund Tier 1 (RG63/22), and Nanyang Technological University, Singapore (Start-Up Grant).


\begin{thebibliography}{2}


\bibitem{christofides2013distributed}
P.~D.~Christofides, R.~Scattolini, D.~M.~de~la Pe{\~n}a, J.~Liu.
\newblock Distributed model predictive control: A tutorial review and future
  research directions.
\newblock \emph{Computers \& Chemical Engineering}, 51:21--41, 2013.

\bibitem{battistelli2016stability}
G.~Battistelli, L.~Chisci.
\newblock Stability of consensus extended Kalman filter for distributed state
  estimation.
\newblock \emph{Automatica}, 68:169--178, 2016.

\bibitem{yin2018forming}
X.~Yin, J.~Zeng, J.~Liu.
\newblock Forming distributed state estimation network from decentralized estimators.
\newblock \emph{IEEE Transactions on Control Systems Technology}, 27(6):2430--2443, 2018.

\bibitem{daoutidis2019decomposition}
P.~Daoutidis, W.~Tang, A.~Allman.
\newblock Decomposition of control and optimization problems by network structure: Concepts, methods, and inspirations from biology.
\newblock \emph{AIChE Journal}, 65(10), 2019.

\bibitem{tang2018optimal}
W.~Tang, A.~Allman, D.~B.~Pourkargar, P.~Daoutidis.
\newblock Optimal decomposition for distributed optimization in nonlinear model predictive control through community detection.
\newblock \emph{Computers \& Chemical Engineering}, 111:43--54, 2018.

\bibitem{chen2020machine}
S.~Chen, Z.~Wu, D.~Rincon, P.~D.~Christofides.
\newblock Machine learning-based distributed model predictive control of nonlinear processes.
\newblock \emph{AIChE Journal}, 66(11):e17013, 2020.



\bibitem{farina2009moving}
M.~Farina,  G.~Ferrari-Trecate, R.~Scattolini.
\newblock Moving-horizon partition-based state estimation of large-scale systems.
\newblock \emph{Automatica}, 46(5):910--918, 2010.

\bibitem{schneider2015convergence}
R.~Schneider, W.~Marquardt.
\newblock Convergence and stability of a constrained partition-based moving
  horizon estimator.
\newblock \emph{IEEE Transactions on Automatic Control}, 61(5):1316--1321, 2015.




\bibitem{schneider2017solution}
R.~Schneider.
\newblock A solution for the partitioning problem in partition-based moving-horizon estimation.
\newblock \emph{IEEE Transactions on Automatic Control}, 62(6):3076--3082, 2017.

\bibitem{schneider2015iterative}
R.~Schneider, R.~Hannemann-Tam{\'a}s, W.~Marquardt.
\newblock An iterative partition-based moving horizon estimator with coupled inequality constraints.
\newblock \emph{Automatica}, 61:302--307, 2015.



\bibitem{yin2022event}
X.~Yin, B.~Huang.
\newblock Event-triggered distributed moving horizon state estimation of linear
  systems.
\newblock \emph{IEEE Transactions on Systems, Man, and Cybernetics: Systems},
  52(10):6439--6451, 2022.

\bibitem{li2023iterative}
X.~Li, S.~Bo, Y.~Qin, X.~Yin.
\newblock Iterative distributed moving horizon estimation of linear systems
  with penalties on both system disturbances and noise.
\newblock \emph{Chemical Engineering Research and Design}, 194:878--893, 2023.

\bibitem{yin2017distributed}
X.~Yin, J.~Liu.
\newblock Distributed moving horizon state estimation of two-time-scale
  nonlinear systems.
\newblock \emph{Automatica}, 79:152--161, 2017.

\bibitem{farina2011moving}
M.~Farina,  G.~Ferrari-Trecate, C.~Romani, R.~Scattolini.
\newblock Moving horizon estimation for distributed nonlinear systems with application to cascade river reaches.
\newblock \emph{Journal of Process Control}, 21(5):767--774, 2011.




\bibitem{zhang2013distributed}
J.~Zhang, J.~Liu.
\newblock Distributed moving horizon state estimation for nonlinear systems
  with bounded uncertainties.
\newblock \emph{Journal of Process Control}, 23(9):1281--1295, 2013.

\bibitem{zeng2015distributed}
J.~Zeng, J.~Liu.
\newblock Distributed moving horizon state estimation: Simultaneously handling
  communication delays and data losses.
\newblock \emph{Systems \& Control Letters}, 75:56--68, 2015.

\bibitem{qu2009computation}
C.~C.~Qu, J.~Hahn.
\newblock Computation of arrival cost for moving horizon estimation via
  unscented Kalman filtering.
\newblock \emph{Journal of Process Control}, 19(2):358--363, 2009.

\bibitem{lopez2011constrained}
R.~L{\'o}pez-Negrete,  S.~C.~Patwardhan, L.~T.~Biegler.
\newblock Constrained particle filter approach to approximate the arrival cost
  in moving horizon estimation.
\newblock \emph{Journal of Process Control}, 21(6):909--919, 2011.

\bibitem{rao2001constrained}
C.~V.~Rao, J.~B.~Rawlings, J.~H.~Lee.
\newblock Constrained linear state estimation -- a moving horizon approach.
\newblock \emph{Automatica}, 37(10):1619--1628, 2001.

\bibitem{gharbi2019proximity}
M.~Gharbi, C.~Ebenbauer.
\newblock Proximity moving horizon estimation for linear time-varying systems and a Bayesian filtering view.
\newblock \emph{IEEE Conference on Decision and Control}, 3208--3213, Nice, France, 2019.


\bibitem{rao2002constrained}
C.~V.~Rao, J.~B.~Rawlings.
\newblock Constrained process monitoring: Moving-horizon approach.
\newblock \emph{AIChE Journal}, 48(1):97--109, 2002.


\bibitem{gharbi2020proximity}
M.~Gharbi, F.~Bayer, C.~Ebenbauer.
\newblock Proximity moving horizon estimation for discrete-time nonlinear systems.
\newblock \emph{IEEE Control Systems Letters}, 5(6):2090--2095, 2020.





\bibitem{li2023data}
X.~Li, S.~Bo, X.~Zhang, Y.~Qin, X.~Yin.
\newblock Data‐driven parallel Koopman subsystem modeling and distributed moving horizon state estimation for large‐scale nonlinear processes.
\newblock \emph{AIChE Journal}, 70(3):e18326, 2024.

\bibitem{li2023partition}
X.~Li, A.~W.~K.~Law, X.~Yin.
\newblock Partition-based distributed extended Kalman filter for large-scale nonlinear processes with application to chemical and wastewater treatment processes.
\newblock \emph{AIChE Journal}, 69(12):e18229, 2023.


\bibitem{NMPC}
X.~Li, X.~Yin.
\newblock A recursive approach to approximate arrival costs in distributed moving horizon estimation.
\newblock {\em IFAC Conference on Nonlinear Model Predictive Control}, Accepted.

\bibitem{findeisen1997moving}
P.~K.~Findeisen.
\newblock \emph{Moving horizon state estimation of discrete time systems}.
\newblock Ph.D. thesis, University of Wisconsin--Madison, 1997.

\bibitem{knufer2023nonlinear}
S.~Kn{\"u}fer, M.~A.~M{\"u}ller.
\newblock Nonlinear full information and moving horizon estimation: Robust global asymptotic stability.
\newblock \emph{Automatica}, 150:110603, 2023.

\bibitem{rawlings2012postface}
J.~Rawlings, D.~Mayne.
\newblock Postface to model predictive control: Theory and design.
\newblock \emph{Nob Hill Pub}, 5:155--158, 2012.

\bibitem{henderson1981deriving}
H.~Henderson, S.~Searle.
\newblock On deriving the inverse of a sum of matrices.
\newblock \emph{SIAM Review}, 23(1):53--60, 1981.

\bibitem{farina2024moving}
M.~Farina,  G.~Ferrari-Trecate, R.~Scattolini.
\newblock Moving horizon partition-based state estimation of large-scale systems--Revised version.
\newblock \emph{arXiv:2401.17933}, 2024.






\end{thebibliography}
\end{document}